\newcommand{\ecversion}[2]{#2}{} %ec-version
\newcommand{\abs}[1]{\vert{#1}\vert}
\newtheorem{theorem}{Theorem}[section]
\newtheorem{lemma}[theorem]{Lemma}
\newtheorem{corollary}[theorem]{Corollary}
\newtheorem{defn}[theorem]{Definition}
\def\squareforqed{\hbox{\rule{2.5mm}{2.5mm}}}
\def\QED{\ifmmode\squareforqed % in mathmode : print just the square
  \else{\nobreak\hfil   % \hfil to end of current line
    \penalty50                 % penalty 50 for breaking the line here
    \hskip1em                  % leave at least 1em before the square
    \null                      % \hbox{}
    \nobreak                   % prohibit line break
    \hfil                      % another \hfil (if a break occurred)
    \squareforqed              % put the square here
    \parfillskip=0pt           % the line really ends here
    \finalhyphendemerits=0     % ignore a hyphen on the line above
    \endgraf}                  % end the paragraph
  \fi}
\def\blksquare{\rule{2mm}{2mm}}
\def\qedsymbol{\blksquare}
\newcommand{\bg}[1]{\medskip\noindent{\bf #1}}
\newcommand{\ed}{{\hfill\qedsymbol}\medskip}
\newenvironment{proof}{\bg{Proof : }}{\ed}
\newenvironment{proofof}[1]{\bg{Proof of #1 : }}{\ed}
\newenvironment{example}{\bg{Example. }}{\ed}
\newcommand{\R}{\ensuremath{\mathbb R}}
\newcommand{\comment}[1]{}
{}  %Alternative to be used for 10 page
\newcommand{\junk}[1]{}
\newlength{\tmp} \newlength{\lpsx} \newlength{\lpsy} \newlength{\upsx}
\newlength{\upsy}
\newcommand{\email}[1]{\texttt{#1}}
\newcommand{\sw}{\mathbf{W}}
\newcommand{\csw}{\mathbf{\bar{W}}}
\newcommand{\cl}{\mathtt{c}}
\newcommand{\up}{\mathtt{u}}
\newcommand{\bv}{\bar{v}}
\begin{document}

\setcounter{page}{0}

\title{Efficiency Guarantees in Auctions with Budgets}
\ecversion{
\author{SHAHAR DOBZINSKI
\affil{Weizmann Institute of Science}
RENATO PAES LEME
\affil{Microsoft Research SVC}
}
}{
\author{
Shahar Dobzinski\\
       Weizmann Institute\\
       \email{dobzin@gmail.com}
% 2nd. author
\and
Renato Paes Leme\\
       Microsoft Research SVC\\
       \email{renatoppl@cs.cornell.edu}
% 2nd. author
}
}

\date{}

\ecversion{}{\maketitle}

\begin{abstract}
In settings where players have a limited access to liquidity, represented in the
form of budget constraints, efficiency maximization has proven to be a
challenging goal. In particular, the social welfare cannot be approximated by a
better factor then the number of players. Therefore, the literature has mainly
resorted to Pareto-efficiency as a way to achieve efficiency in such settings.
While successful in some important scenarios, in many settings it is known that
either exactly one incentive-compatible auction that always outputs a
Pareto-efficient solution, or that no truthful mechanism can always guarantee a
Pareto-efficient outcome.
Traditionally, impossibility results can be avoided by considering
approximations. However, Pareto-efficiency is a binary property
(is either satisfied or not), which does not allow for approximations.

In this paper we propose a new notion of efficiency, called \emph{liquid
welfare}. This is the maximum amount of revenue an omniscient seller would be
able to extract  from a certain instance. We explain the intuition behind this
objective function and show that it can be $2$-approximated by two different
auctions. Moreover, we show that no truthful algorithm can guarantee an
approximation factor better than $4/3$ with respect to the liquid welfare, and
provide a truthful auction that attains this bound in a special case.

Importantly, the liquid welfare benchmark also overcomes impossibilities for some
settings. While it is impossible to design Pareto-efficient auctions
for multi-unit auctions where players have decreasing marginal values, we give
a deterministic $O(\log n)$-approximation for the liquid welfare in this
setting.
\end{abstract}

\ecversion{\maketitle}{}

\ecversion{}{
\renewcommand{\thepage}{}
\clearpage
\pagenumbering{arabic}

}

\section{Introduction}

\comment{
Budget constraints play a major role in many settings of interest: Benoit and
Krishna \cite{benoit_krishna} point out the importance of budgets in
privatization auctions in Eastern Europe and in the sale of spectrum in
the US, and more generally in settings where the
magnitude of values involved is so large that it might exhaust the players
liquid assets. Pai and Vohra \cite{pai_vohra} illustrate this situation by
saying that ``not every potential buyer of a David painting who values it at a
million dollars has access to a million dollars to make the bid.'' \comment{They
also make the case tha budget constraints are more concrete and palatable
than valuations.} Another practical setting for which budget constraints are of
foremost relevance is internet advertising. In fact, in the interface of Google
Adwords the choice of budget to spend is the first question asked to
advertisers, even before they
are asked bids or keywords. We refer to  Aggarwal et al \cite{Aggarwal09}, Goel et al
\cite{goel12} and Colini-Baldeschi et al \cite{henzinger11} for a
discussion of the role of the budgets in internet advertisement. For a more
extensive discussion on the source of financial constraints, we refer to Che
and Gale \cite{che_gale}.
}

%Below: AAW version. Above, general version

{
Budget constraints play a major role in many settings of interest, being internet
advertisement one of the most important examples. In fact, the choice of budget to spend is the first question asked to
advertisers in the interface of
Google Adwords, even before they are asked bids or keywords. Much work has been
devoted to understanding the impact of budget constraints in sponsored search
auctions. See. e.g., \cite{Aggarwal09, goel12, henzinger11}.
Budgets are generally important in settings where the magnitude of the
financial transactions is very large. Benoit and Krishna
 \cite{benoit_krishna} point out the importance of budgets in
privatization auctions in Eastern Europe and in the sale of spectrum in
the US, and more generally in settings where the
magnitude of values involved is so large that it might exhaust the players
liquid assets. Pai and Vohra \cite{pai_vohra} illustrate this situation by
saying that ``not every potential buyer of a David painting who values it at a
million dollars has access to a million dollars to make the bid.'' \comment{They
also make the case that budget constraints are more concrete and palatable
than valuations.}  For a more extensive discussion on the source of financial
constraints, we refer to Che and Gale \cite{che_gale}.
}

Due to its practical relevance, it is not surprising that so many theoretical
investigations have been devoting to analyzing auctions for budget
constrained agents. For analyzing the impact of budget in the revenue of
standard auctions one can cite \cite{che_gale} and  \cite{benoit_krishna}, and
for designing mechanims optimize or
approximately optimize revenue, one can cite:
\cite{laffont_robert, malakhov_vohra, pai_vohra, Borgs05, chawla11}.

When the objective is welfare efficiency rather then revenue, the literature
has early stumbled upon impossibility results. The traditional social welfare
measure, the sum of player's values for their outcomes, is known not to work
well under budget constraints. Indeed, a folklore result shows that even if
budgets are known to the auctioneer, no incentive-compatible auction
can achieve a better than $n$ approximation for the social welfare, where $n$ is
the number of players. This motivates the search for incentive-compatible
auctions satisfying weaker notions of efficiency. Dobzinski, Lavi and Nisan
\cite{dobzinski12} suggest studying Pareto-efficient auctions: the outcome
of an auction is said to be Pareto-efficient if there it no alternative outcome
(allocation and payments) where no agent (bidders or auctioneer) are worse-off
and at least one agent is better off. If budgets are public, the authors give an
incentive-compatible and
Pareto-efficient multi-unit auction based on the Ausubel's clinching framework
\cite{Ausubel_multi}. Furthermore, they show that this auction is the unique
truthful auction that always produces Pareto-efficient solution. The study of
Pareto-efficient auctions for budget
constrained players has been extended to different settings in a sequence of
follow-up papers \cite{Bhattacharya10, fiat_clinching, henzinger11, goel12,
goel13}.

\subsection*{Beyond Pareto Efficiency?}

It therefore seems that Pareto efficiency has emerged as the de-facto standard
for measuring efficiency in settings where bidders are budget constrained.
Indeed, most of the aforementioned papers and results provide positive results
by offering new auctions. Yet, this is far from being a complete solution from
both theoretical and practical point of views. We now elaborate on this issue.

In a sense, the uniqueness result of Dobzinski et al \cite{dobzinski12} for
public budgets -- that shows that the clinching auction is
the only Pareto efficient, truthful auction -- may be viewed negatively. Rare are
the cases in practice in which the designer sole goal is to obtain a Pareto
efficient allocation. A more realistic view is that theory provides the
designer a toolbox of complementary methods and techniques designed to obtain
various different goals (efficiency, revenue maximization, fairness,
computational efficiency, etc.) and balance between them. The composition of
these tools as well as their adaptation to the specifics of the setting
and fine tuning is the designer's task. A uniqueness result -- although
extremely appealing from a pure theoretical perspective -- implies that the
designer's toolbox contains only one tool, obviously an undesirable scenario.

Furthermore, although from a technical point of view the analysis of the
existing algorithms is very challenging and the proof techniques developed are
quite unique to each setting, the auctions themselves are all variants of the
same basic clinching idea of Ausubel \cite{Ausubel_multi}. Again, it is
obviously preferable to have more than just one bunny in the hat that will help
us design auctions for these important settings.

The situation is obviously even more severe in more complicated settings, where
even this lonely bunny is
not available. For example, for private budgets and additive multi-unit auctions,
an impossibility was given by \cite{dobzinski12}, for heterogenous items and
public budgets by  \cite{fiat_clinching, dutting12}
and for
as multi-unit auctions with subadditive valuations and public budgets,
by Goel, Mirrokni and Paes Leme \cite{goel12, Lavi_May}.

\subsection*{Alternatives to Pareto Efficiency}

Our main goal is to research alternatives to Pareto efficiency for budget
constrained agents. We start from the observation that a Pareto efficient
solution is a binary notion: an allocation is either Pareto efficient or not,
and there is no sense of one allocation being ``more Pareto efficient'' than the
other. This is in contrast with efficiency in quasi linear environments
where the traditional welfare objective induces a total order on the
the allocations.

Our main goal in this paper is to provide a new measure of efficiency for
budgeted settings. The desiderata for this measure are: (i) it is quantifiable,
i.e., attaches a value to each outcome; (ii) is achievable, i.e., can be
approximated by incentive-compatible mechanisms and (iii) allows different
designs that approximate welfare.

The measure we propose is called the \emph{liquid welfare}. Before defining
it, we give a revenue-motivated definition of the traditional social welfare in
unbudgeted settings and show how it naturally generalizes to budgeted settings.
One can view the traditional welfare of a certain outcome as the maximum
revenue an omniscient seller can obtain from that outcome. If each agent $i$
has value $v_i(x_i)$ for a certain outcome $x_i$, the omniscient seller can
extract revenue arbitrarily close to $\sum_i v_i(x_i)$ by offering this outcome
to each player $i$ for price $v_i(x_i)-\epsilon$. This definition generalizes
naturally to budgeted settings. Given an outcome, $x_i$, the
\emph{willingness-to-pay} of agent $i$ is $v_i(x_i)$, which is the maximum he
would give for this outcome in case he had unlimited resources. His
\emph{ability-to-pay}, however, is $B_i$, which is the maximum amount of money
available to him. We define his \emph{admissibility-to-pay} as the maximum
value he would admit to pay for this outcome, which is the minimum between his
willingness-to-pay and his ability-to-pay. The liquid welfare of a certain
outcome is defined as the total admissibility-to-pay. Formally $\csw(x) =
\sum_i \min(v_i(x_i), B_i)$.

An alternative point of view is as follows: efficiency should be measured only
with respect to the funds available to the bidder at the time of the auction,
and not the additional liquidity he might gain after receiving the goods in the
auction. When using the liquid welfare objective, the auctioneer is therefore
freed from considering the hypothetical use the bidders will make of the items
they win in the auction and can focus only on the resources available to them
at the time of the auction.

This objective satisfies our first requirement: it associates each outcome with
an objective measure. Also, it is achievable. In fact, the clinching
auction \cite{dobzinski12}, which is the base for all auction achieving
Pareto-efficient outcomes for budgeted settings, provide a $2$-approximation
for the liquid welfare objective. To show that this allows flexibility in the
design, we show a different auction that also provides a $2$-approximation and
reveals a connection between our liquid welfare objective and the notion
of market equilibrium.

It is appropriate to discuss the applicability and limitations of the liquid
welfare objective. We would like to begin by illustrating a setting for which
it is \emph{not} applicable. If one were to auction hospital beds or access to
doctors, it would be morally repugnant to privilege players based on their
ability-to-pay. Therefore, we are not interested in claiming that the liquid
welfare objective is the only alternative to Pareto efficiency, but rather argue
that in \emph{some} settings it produces reasonable results. Developing other
notions of efficiency is an important future direction.

Still, in many settings capping the welfare of the agents by
their budgets makes perfect sense. Consider designing a market like internet
advertising which aims at a healthy mix of good efficiency and revenue. In
practice, players that bring more money to the market provide health to the
market and make it more efficient. In real markets, there are practices to
encourage wealthier players to enter the market. Therefore, privileging such
players in the objective is somewhat natural.

An interesting question is whether one can have a truthful mechanism for
additive valuations with public budgets that provides an approximation ratio
better than $2$ approximation. We show a lower bound of $\frac 4 3$. Closing the
gap remains an open question, but we do show that for the special case of $2$
players with identical public budgets there is a truthful auction that provides
a matching upper bound.

We then move one to consider a setting in which truthful auction that always
output Pareto-efficient solution do not exist: indivisible multi-unit auctions
with additive valuations and private budgets. For this setting we borrow ideas
from Bartal, Gonen and Nisan \cite{BGN} and provide a deterministic
$O(\log^2 n)$ approximation to the liquid welfare. In fact, the algorithm is
even more powerful since the approximation ratio holds even if
the valuations of the bidders are known to be subadditive.

\subsection*{Related Work}
We have already surveyed results designing mechanisms for
budget-constrained agents. Here, we focus on surveying results directly related
to our efficiency measure and to our philosophical approach to efficiency
maximization. As far as we know, the liquid welfare was first appeared in
Chawla, Malec and Malekian
\cite{chawla11} for the first time as an implicit upper bound on the revenue
that a mechanism can extract.

Independently and simultaneously, two other approaches were proposed to provide
quantitative guarantees for budgeted settings. Devanur, Ha and Hartline
\cite{devanur12} show that the welfare of the clinching auction is a
$2$-approximation to the welfare of the best envy-free equilibrium. Their
approach, however, is restricted to settings with \emph{common budgets}, i.e.,
all agents have the same budget.

Syrgkanis and Tardos \cite{syrgkanis12} leave the realm of truthful
mechanisms
and study the set of Nash and Bayes-Nash equilibria of simple mechanisms. For a
wide class of mechanisms they show that the \emph{traditional} welfare in
equilibrium of such mechanism is a constant fraction of the optimal liquid
welfare objective (which they call \emph{effective welfare}). {
Their approach differs from ours in two ways: first they study auctions
in equilibrium while we focus on incentive compatible auctions. Second,
the guarantee in their mechanism is that the welfare of the allocation obtained
is always greater than some fraction of the liquid welfare.
The guarantee of our
mechanisms is stronger: we construct mechanisms in which the \emph{liquid welfare}
is always greater than some fraction of the liquid welfare, which implies in
particular that the welfare is greater than some fraction of the liquid welfare
(since the welfare of an allocation is at least its liquid welfare).
%define welfare benchmarks and
%approximate the traditional welfare against such benchmarks. We, on the other hand,
%define an \emph{objective function} and compare it against its optimal.
%We would like to note that, since our liquid welfare objective is always
%smaller of equal to the traditional welfare, our results also imply
%an efficiency guarantee of the traditional welfare against the optimal
%liquid welfare benchmark.
}

\subsection*{Summary of Our Results}

In this paper we have proposed to study the liquid welfare. We have provided
two truthful algorithms that provide a $2$ approximation to this objective
function for the setting of multi-unit auctions with public budgets. For the
harder setting of multi-unit auctions with subadditive valuations and private
budgets we have provided a truthful algorithm that provides an $O(\log^2 n)$.
For submodular bidders, the same mechanism provides an $O(\log n)$
approximation.

The biggest immediate problem that we leave open is to determine whether there
is an algorithm for multi unit auction with private budgets that obtain a
constant approximation ratio. The question is even open if the valuations are
known to be additive. More generally, are there truthful algorithms that
provides a good approximation for \emph{combinatorial} auctions? On top of that,
notice that computational considerations might come into play: while all of the
constructions that we present in this paper happen to be computationally
efficient, there might be a gap between the power of truthful algorithms in
general and the power of computationally efficient truthful algorithms.

\comment{
The main message of this paper is that there might be reasonable ways to
quantify the quality of allocations in setting where bidders have budget
constraints. We studied the liquid welfare, but for some settings other
notions might be more appropriate. We hope that this paper will trigger the
study of other objective functions, and their advantages and disadvantages.
}

\section{Preliminaries}\label{sec:preliminaries}

\subsection{Environments of interest and auction basics}
We consider $n$ players and a set $X$ of outcomes (also called environment). For
each player, let $v_i : X \rightarrow \R_+$ be the valuation function for player
$i$. We consider that agents are budgeted quasi-linear, i.e., each agent $i$ has
a budget $B_i$ and for an outcome $x$ and for payments $\pi_1, \hdots, \pi_n$,
the utility of agent $i$ is: $u_i = v_i(x) - \pi_i $ if $\pi_i \leq B_i$ and
$-\infty$ o.w. Below, we list a set of environments we are interested:

\begin{enumerate}
\item \emph{Divisible-multi-unit auctions and additive bidders:} $X
= \{(x_1, \hdots, x_n); \sum_i x_i = s\}$ for some constant $s$ and $v_i(x) =
v_i \cdot x_i$, so we can represent the valuation function of each agent by a
single real number $v_i \geq 0$.
\item \emph{Divisible-multi-unit auctions with subadditive bidders:} $X
= \{(x_1, \hdots, x_n); \sum_i x_i = s\}$ for some constant $s$ and $v_i(x) =
v_i(x_i)$, where $v_i : \R_+ \rightarrow \R_+$ is a monotone non-decreasing
subadditive function. A valuation function $v_i$ is subadditive
if for every $x_1,x_2$, $v_i(x_1 + x_2) \leq v_i(x_1)+v_i(x_2)$.
 \item \emph{$0/1$ environments:} $X \subseteq \{0,1\}^n$ and $v_i(x) = v_i$
if $x_i = 1$ and $v_i(x) = 0$ otherwise. Again the valuation is represented by a
single $v_i \geq 0$.
\end{enumerate}

An auction for a particular setting elicits the valuations of the players and
budgets $B_1, \hdots, B_n$ and outputs an outcome $x \in X$ and payments
$\pi_1, \hdots, \pi_n$ for each agents respecting budgets, i.e., such that
$\pi_i \leq B_i$ for each agent. We will distinguish between \emph{public
budgets} and \emph{private budgets} mechanisms. In the former, the auctioneer
has access to the true budget of each agent\footnote{Most of the papers in
the literature on efficient auctions for budgeted settings \cite{dobzinski12,
fiat_clinching, goel12, henzinger11, goel13} fall in this category, including
classical references as, Laffont and Robert \cite{laffont_robert}
and Maskin \cite{maskin00}.}. In the later case,
agents need to be incentivized to report their true budget. In either case, the
valuations of each agent are private. We will focus on designing mechanisms
that are incentive compatible (a.k.a. truthful), i.e., are such that agents
utilities are maximized once they report their true value  in the public budget
case and their true value and budget in the private budget case. We will also
require mechanisms to be individually rational, i.e., agents always derive
non-negative utility upon bidding their true value.

In the case of divisible multi-unit auctions and additive bidders, the
valuations can be represented by real numbers, So we can see the auctions as a
pair of functions $x:\R_+^n \times \R_+^n \rightarrow \R_+^n$ and $\pi:\R_+^n
\times \R_+^n \rightarrow \R_+^n$ that map $(v,B)$ to a vector of allocations
$x(v,B) \in \R^n_+$ and a vector of payments $\pi(v,B) \in \R^n_+$. The set of
functions that induce incentive compatible and individually rational auctions
are characterized by Myerson's Lemma:

\begin{lemma}[Myerson \cite{myerson-81}]\label{lemma:myerson}
 A pair of functions $(x,\pi)$ define an incentive-compatible and individually
rational auction iff (i) for each $v_{-i}$, $x_i(v_i,v_{-i})$ is monotone
non-decreasing in $v_i$ and (ii) the payments are such that: $\pi_i(v_i,
v_{-i}) = v_i \cdot x_i(v_i, v_{-i}) - \int_0^{v_i} x_i(u, v_{-i}) du$.
\end{lemma}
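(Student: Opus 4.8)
The plan is to reduce the statement to a single-parameter problem and then run the standard ``Myerson'' argument. Fix a player $i$ and a profile $v_{-i}$ of the others' reports, and abbreviate $x(t) := x_i(t, v_{-i})$ and $\pi(t) := \pi_i(t, v_{-i})$. Since payments take values in $\R_+$ and the mechanism always charges a budget-feasible payment by hypothesis, an ``overbidding'' deviation that would trigger the $-\infty$ utility is never profitable, so we may reason exactly as in the unbudgeted quasi-linear model: the utility of a bidder with true value $v_i$ who reports $b$ is $u(b; v_i) = v_i\, x(b) - \pi(b)$, incentive compatibility is $u(v_i; v_i) \ge u(b; v_i)$ for all $v_i, b$, and individual rationality is $u(v_i; v_i) \ge 0$ for all $v_i$.

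For the ``if'' direction I would assume (i) monotonicity and (ii) the payment identity and verify IC and IR directly. Plugging (ii) into $u(v_i; v_i)$ cancels the $v_i x(v_i)$ terms and leaves the on-path utility $\int_0^{v_i} x(u)\,du \ge 0$, giving IR. For IC, substituting (ii) into $u(v_i; v_i) - u(b; v_i)$ gives, after cancellation, $\int_b^{v_i} x(u)\,du - (v_i - b)\, x(b)$; I then split into the cases $b < v_i$ and $b > v_i$ and use monotonicity to bound the integrand ($x(u) \ge x(b)$ on $[b,v_i]$ in the first case, $x(u) \le x(b)$ on $[v_i,b]$ in the second), making the expression nonnegative in both; the case $b = v_i$ is trivial.

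For the ``only if'' direction I would start from IC and IR. Writing the IC inequalities for a pair of reports $a < c$ in both directions and adding them, the payment terms cancel and one is left with $(c-a)(x(c)-x(a)) \ge 0$, which is monotonicity (i). For the payment identity, each of the same two IC inequalities, rearranged, yields the sandwich $a\,(x(c)-x(a)) \le \pi(c) - \pi(a) \le c\,(x(c)-x(a))$; telescoping this over a fine partition $0 = t_0 < \cdots < t_m = v_i$ squeezes $\pi(v_i) - \pi(0)$ between the lower and upper Riemann--Stieltjes sums of $\int_0^{v_i} u\,dx(u)$, which converge (because $x$ is monotone, hence of bounded variation) to that integral; integration by parts rewrites it as $v_i x(v_i) - \int_0^{v_i} x(u)\,du$. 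Alternatively, one can note that the on-path utility $g(v_i) = \sup_b \{ v_i x(b) - \pi(b) \}$ is convex, hence locally Lipschitz with $g' = x$ almost everywhere, and integrate. Finally $\pi(0) = 0$: IR at $v_i = 0$ forces $\pi(0) \le 0$, and payments are nonnegative, so $\pi(0) = 0$; rearranging gives (ii).

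The main obstacle is the limiting step in the ``only if'' direction: passing from the finite sandwich on $\pi(c) - \pi(a)$ to the exact integral identity. One must check that $x$, being monotone, is Riemann(--Stieltjes) integrable so the upper and lower sums have a common limit, and (in the convexity route) that a convex function on an interval coincides with the integral of its almost-everywhere derivative; everything else is algebra and a short case analysis.
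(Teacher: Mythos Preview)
The paper does not prove this lemma at all: it is stated as a citation of Myerson's classical characterization and used as a black box throughout (for instance, to define payments in Definition~\ref{defn:uniform_price_auction} and to argue non-implementability from non-monotonicity near Figure~\ref{fig:xstar}). So there is no in-paper proof to compare against.

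Your argument is the standard single-parameter Myerson proof and it is correct. The ``if'' direction is handled cleanly; in the ``only if'' direction the sandwich $a(x(c)-x(a)) \le \pi(c)-\pi(a) \le c(x(c)-x(a))$ telescoped over a partition is exactly the right tool, and monotonicity of $x$ guarantees the Riemann--Stieltjes sums converge. Pinning down the constant via $\pi(0)=0$ uses both IR at $0$ and the paper's standing assumption that $\pi$ takes values in $\R_+$, which you invoke correctly.

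One small remark: your opening paragraph about budgets is not needed here. The lemma as stated in the paper is purely quasi-linear and makes no reference to $B_i$; budget feasibility of the Myerson payments is established separately where needed (e.g., Lemma~\ref{lemma:budget_feasibility}). In the public-budget model the mechanism always charges at most $B_i$ regardless of the reported $v_i$, so no report can trigger the $-\infty$ branch and the characterization is literally the unbudgeted one. You may simply drop that discussion.
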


\subsection{Efficiency measures}
The traditional efficiency measure in mechanism design is the \emph{social
welfare} which associates for each outcome $x$, the objective: $\sw(x) =
\textstyle\sum_i v_i(x)$. It is known that one cannot even approximate the
optimal welfare in budgeted settings in an incentive-compatible way, even if
the budgets are known and equal. The result is folklore and we only
sketch the proof here for completeness:

\begin{lemma}[folklore]\label{lemma:welfare_innaprox}
Consider the divisible-multi-unit auctions and additive bidders. There is no
$\alpha$-approximate, incentive compatible and individually rational mechanism
$x(v), \pi(v)$ with $\alpha < {n}$. For $\alpha = {n}$ there is
the mechanism that allocates the item at random to one player and charges
nothing.
\end{lemma}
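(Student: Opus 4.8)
The plan is to treat the two halves of the statement separately. The $n$-approximation is the easy direction: the mechanism that picks a player uniformly at random and gives her the entire supply $s$, charging nothing, is incentive compatible (the allocation is independent of the reported values and there are no payments), individually rational, and budget-feasible, and its expected welfare is $\tfrac{s}{n}\sum_i v_i \ge \tfrac{s}{n}\max_i v_i = \tfrac1n \cdot \mathrm{OPT}$, since the optimal welfare is $s\cdot\max_i v_i$.

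For the lower bound I would work in the strongest form of the impossibility: public \emph{equal} budgets, say $B_i = 1$ for all $i$, and supply $s=1$. Assume toward a contradiction that $(x,\pi)$ is an incentive compatible, individually rational, $\alpha$-approximate mechanism with $\alpha < n$. The idea is to defeat it on instances of the form ``one bidder bids a huge value, everybody else bids a large but fixed value $M$.'' First fix a large $M$ and consider the symmetric profile $(M,\dots,M)$: feasibility gives $\sum_i x_i(M,\dots,M)\le 1$, so some bidder --- relabel her as bidder $n$ --- gets $x_n(M,\dots,M)\le 1/n$ there. Now look at the one-parameter family $h(t):=x_n(t,M,\dots,M)$ obtained by raising bidder $n$'s bid; by Myerson's Lemma $h$ is nondecreasing, hence bounded and convergent, with $h(t)\uparrow L\le 1$ as $t\to\infty$.

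The contradiction comes from pitting two constraints against each other. On one hand, Myerson's payment identity gives $\pi_n(t,M,\dots,M)=\int_0^t\bigl(h(t)-h(u)\bigr)\,du \ge M\bigl(h(t)-h(M)\bigr)$ for $t>M$ (using monotonicity of $h$), so the budget cap $\pi_n\le 1$ forces $h(t)\le h(M)+\tfrac1M\le \tfrac1n+\tfrac1M$, whence $L\le \tfrac1n+\tfrac1M$. On the other hand, the profile $(t,M,\dots,M)$ with $t>M$ has optimum $t$, while the mechanism's welfare on it is $t\,h(t)+M\sum_{j\ne n}x_j \le t\,h(t)+M$ (only $\sum_{j\ne n}x_j\le 1$ is used, not full allocation); the $\alpha$-approximation then gives $h(t)\ge \tfrac1\alpha-\tfrac Mt$, so $L\ge \tfrac1\alpha$. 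Combining, $\tfrac1\alpha\le \tfrac1n+\tfrac1M$ for every large $M$, hence $\alpha\ge n$, contradicting $\alpha<n$.

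The one genuinely load-bearing step is the first half of the last paragraph: one must see that, read through Myerson's payment formula, the budget bound says that raising your bid lets you ``clinch'' extra allocation worth at most $B$, so against a field all bidding $M$ your share can exceed your share in the symmetric profile by at most $B/M\to 0$ --- it is this that pins the would-be-large bidder's allocation down to essentially $1/n$ and kills any approximation below $n$. Everything else is routine bookkeeping: that $h$ converges because it is monotone and bounded, that we never need the mechanism to allocate the whole supply, and that, if one prefers reported values to stay bounded, the ``$M$''-bidders can be replaced by bidders reporting their budget with no change to the argument.
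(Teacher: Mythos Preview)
Your proof is correct and follows essentially the same approach as the paper's: the budget cap on payments, read through incentive compatibility, bounds how much a bidder's allocation can grow as her bid rises, while the approximation guarantee forces that allocation toward $1/\alpha$; combining gives $\alpha\ge n$. The only difference is presentational: the paper runs the argument symmetrically for all $n$ bidders at once---deriving $x_i(v)\ge \alpha^{-1}-1/v_i$ for every $i$ directly from the IC inequality $v_i x_i(v)-\pi_i(v)\ge v_i x_i(v'_i,v_{-i})-\pi_i(v'_i,v_{-i})$ together with $0\le\pi_i\le 1$, then summing and sending all $v_i\to\infty$---whereas you single out one bidder at the symmetric profile $(M,\dots,M)$ and phrase the payment bound via Myerson's integral.
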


\begin{proof}
Let all the agents have budget $B_i = 1$. First notice that $\lim_{v_i
\rightarrow \infty} x_i(v_i, v_{-i}) \geq \alpha^{-1}$ since $\sum_j v_j \cdot
x_j(v) \geq \alpha^{-1} \max_i v_i \geq \alpha^{-1} v_i$, so: $x_i(v) + \sum_{j \neq i}
\frac{v_j}{v_i} \cdot x_j(v) \geq \alpha^{-1}$. Taking $v_i \rightarrow \infty$ we
get that $\lim_{v_i
\rightarrow \infty} x_i(v_i, v_{-i}) \geq \alpha^{-1}$.

By incentive compatibility, $v_i x_i(v) - \pi_i(v) \geq v_i x_i(v'_i, v_{-i})
- \pi_i(v'_i, v_{-i})$. Using the fact that $0 \leq \pi_i \leq 1$ we have: $v_i
x_i(v) \geq v_i x_i(v'_i, v_{-i}) - 1$ so: $x_i(v) \geq x_i(v'_i, v_{-i})
- \frac{1}{v_i}$. Taking $v'_i \rightarrow \infty$ we get: $x_i(v) \geq \alpha^{-1}
- \frac{1}{v_i}$. Summing for all players we get $1 \geq \sum_i x_i(v) \geq n
\alpha^{-1} - \sum_i \frac{1}{v_i}$. Taking $v_i \rightarrow \infty$ for all $i$, we
get $n \alpha^{-1} \leq 1$.
\end{proof}

Due to impossibility results of this flavor, efficiency was mainly achieved in
the literature through \emph{Pareto efficiency}. We say that an outcome
$(x,\pi)$ with $x \in X$ and $\pi_i \leq B_i$ is Pareto-efficient if there is no
alternative outcome where the utility of all the agents involved (including the
auctioneer, being his utility the revenue $\sum_i \pi_i$) does not decrease and
at least one agent improves. Formally, $(x,\pi)$ is Pareto optimal iff there is
no $(x', \pi')$, $x' \in X$, $\pi'_i \leq B_i$ such that:
$$u'_i = v_i \cdot x'_i - \pi'_i \geq u_i = v_i \cdot x_i - \pi_i, \forall i
\quad \text{ and } \quad \textstyle\sum_i \pi'_i \geq \textstyle\sum_i \pi_i
\quad \text{ and } \quad \sum_i v_i x'_i > \sum_i v_i x_i $$
In particular, if the budgets are infinity (or simply very large), then the only
Pareto-optimal outcomes are those maximizing social welfare.  For the setting of
divisible-multi-unit auctions with additive bidders, this is achieved by the
Adaptive Clinching Auction of Dobzinski, Lavi and Nisan \cite{dobzinski12}.
Moreover, the authors show that this is the only incentive-compatible,
individually-rational auction that achieves Pareto-optimal outcomes. The
auction is further analyzed in Bhattacharya et al \cite{Bhattacharya10} and Goel
et al \cite{goel13}.

In this paper we propose the \emph{liquid welfare} objective function:

\begin{defn}[Liquid Welfare]
 In a budgeted setting, we define the liquid welfare associated with outcome
$x
\in X$ by $\csw(x) = \sum_i \min\{ v_i(x), B_i \}$.
\end{defn}

We will refer to the optimal liquid welfare as $\csw^* = \max_{x \in X}
\csw(x)$. It is instructive yet straightforward to see that:

\begin{lemma}\label{lemma:x_star_format}
For divisible-multi-unit auctions and additive bidders, the optimal liquid
welfare $\csw^*$ occurs for $\bar{x}^*_i = \min \left( \frac{B_i}{v_i},
[1-\sum_{j < i} \bar{x}_j^*]^+ \right)$ where players are sorted in
non-increasing
order of value, i.e., $v_1 \geq v_2 \geq \hdots \geq v_n$.
\end{lemma}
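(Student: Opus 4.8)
The plan is to recast the maximization of $\csw(x)=\sum_i \min(v_i x_i,B_i)$ over the feasible set $\{x\ge 0:\sum_i x_i=1\}$ as a fractional knapsack linear program, whose optimum is attained by the familiar greedy rule, and then to check that this rule produces exactly the profile $\bar x^*$ in the statement.

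First I would introduce, for each player, the \emph{saturation amount} $c_i:=B_i/v_i$ (read as $+\infty$ when $v_i=0$, in which case player $i$ contributes $0$ to the liquid welfare regardless of $x$). The elementary identity $\min(v_i x_i,B_i)=v_i\cdot\min(x_i,c_i)$ shows that, setting $y_i:=\min(x_i,c_i)$, one has $\csw(x)=\sum_i v_i y_i$ with $0\le y_i\le c_i$ and $\sum_i y_i\le\sum_i x_i=1$. Conversely, any vector $y$ with $0\le y_i\le c_i$ and $\sum_i y_i\le 1$ is matched by a \emph{feasible} allocation of value at least $\sum_i v_i y_i$: put $x_i=y_i$ for all $i$ and add the leftover $1-\sum_i y_i$ of supply to, say, player $n$; since $t\mapsto\min(v_n t,B_n)$ is non-decreasing this does not decrease $\csw$, while $v_i x_i=v_i y_i\le B_i$ for $i\neq n$. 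Hence
\[
  \csw^*=\max\Bigl\{\textstyle\sum_i v_i y_i \;:\; 0\le y_i\le c_i,\ \textstyle\sum_i y_i\le 1\Bigr\},
\]
which is precisely a fractional knapsack problem with value densities $v_i$, item capacities $c_i$, and knapsack capacity $1$.

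It then remains to invoke the classical optimality of the greedy algorithm for fractional knapsack: process the items in non-increasing density order $v_1\ge v_2\ge\cdots\ge v_n$ and make $y_i$ as large as its own cap and the remaining capacity allow, i.e. $y_i=\min\bigl(c_i,[1-\sum_{j<i}y_j]^+\bigr)$. (The one-line reason: in any solution, if $i<j$ with $y_i<c_i$ and $y_j>0$, shifting mass $\,dt$ from $j$ to $i$ preserves feasibility and changes the objective by $(v_i-v_j)\,dt\ge 0$, so iterating reaches the greedy solution without loss; ties in value are immaterial since they do not change the objective.) Substituting $c_i=B_i/v_i$ turns this recursion into $\bar x^*_i=\min\bigl(B_i/v_i,[1-\sum_{j<i}\bar x^*_j]^+\bigr)$, so $\bar x^*$ (with any leftover supply --- present only when $\sum_i B_i/v_i<1$ --- distributed arbitrarily) attains $\csw^*$. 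The only points needing care are bookkeeping rather than mathematics: interpreting $B_i/v_i$ as $+\infty$ when $v_i=0$, breaking ties in $v$ arbitrarily, and verifying that both directions of the reduction to the knapsack LP preserve the objective while only relaxing feasibility; I do not anticipate any genuine obstacle, the substantive ingredient being the standard greedy-optimality of fractional knapsack.
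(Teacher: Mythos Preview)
The paper does not actually prove this lemma; it prefaces the statement with ``It is instructive yet straightforward to see that'' and leaves it at that. Your proposal is correct and is precisely the standard argument the paper is tacitly invoking: once you write $\min(v_i x_i,B_i)=v_i\min(x_i,B_i/v_i)$, the problem is a fractional knapsack with densities $v_i$ and item caps $B_i/v_i$, for which greedy-by-density is optimal, yielding the stated recursion. Your handling of the edge cases (leftover supply when $\sum_i B_i/v_i<1$, the convention $B_i/v_i=+\infty$ when $v_i=0$, ties in value) is appropriate and nothing further is needed.
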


An easy observation is the optimal allocation
for $\csw^*$ is not monotone in $v_i$, and hence cannot be implemented by a truthful auction. For example, consider $3$ agents with
values $v_1 = v, v_2 = 1, v_3 = 2$ and budgets $B_1 = 1, B_2 = \frac{1}{4}, B_3
= 1$. Now, it is simple to see that $\bar{x}^*_1(v_1)$ is not monotone in $v_1$
as depicted in Figure \ref{fig:xstar}.

\begin{figure}
\centering
\includegraphics{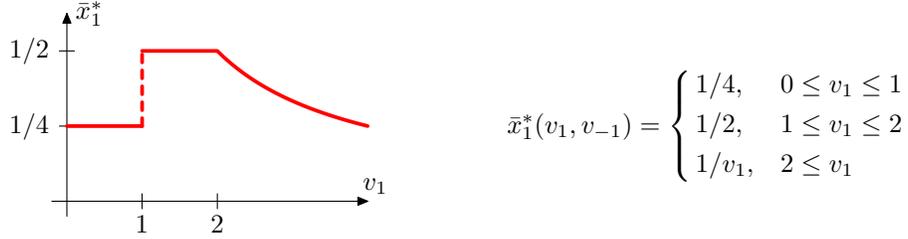}
\caption{Depiction of the first component of $\bar{x}^* = \text{argmax}_x
\csw(x)$
for a $3$ agent instance with $v = (v_1, 1,2)$ and $B = (1,1/4,1)$. The figure
highlights the non-monotonicity of the optimal solution $\bar{x}^*(v)$. }
\label{fig:xstar}
\end{figure}

\subsection{VCG and the liquid welfare objective}

The reader might suspect, however, that a modification of VCG
might take care of optimizing the liquid welfare benchmark. This is indeed
true for a couple of very simple settings. For example, for selling one
indivisible item, a simple Vickrey auction on modified values: $\bar{v}_i =
\min\{v_i, B_i\}$ provides an incentive compatible mechanism that exactly
optimized the liquid welfare objective. More generally:

\begin{theorem}[0/1 environments]
 Given a $0/1$-environment $X \subseteq \{0,1\}^n$ with valuations $v_i(x) =
v_i$ if $x_i = 1$ and zero otherwise. Then running VCG on modified values
$\bar{v}_i =\min\{v_i, B_i\}$ is incentive compatible and exactly optimized the
liquid welfare objective $\csw^*$.
\end{theorem}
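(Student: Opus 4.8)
The plan is to reduce everything to a single observation: in a $0/1$-environment the liquid welfare coincides with the ordinary additive welfare computed on the clipped values. Indeed, for $x \in X$ we have $v_i(x) = v_i x_i$ with $x_i \in \{0,1\}$, hence $\min\{v_i(x), B_i\} = \bar v_i x_i$ and therefore $\csw(x) = \sum_i \bar v_i x_i$ for every feasible $x$. Since VCG on the profile $(\bar v_1,\dots,\bar v_n)$ by definition outputs an allocation $x^*$ maximizing $\sum_i \bar v_i x_i$ over $X$, it automatically maximizes $\csw$, i.e. $\csw(x^*) = \csw^*$. So the only real work is to check that the mechanism is legal (payments never exceed the \emph{true} budgets) and that truthful reporting of $v_i$ — and, in the private-budget model, of $B_i$ — is a dominant strategy.

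For legality and individual rationality I would use the standard VCG formula: a winner $i$ (one with $x^*_i = 1$) pays the externality $\pi_i = \big(\max_{x\in X}\textstyle\sum_{j\ne i}\bar v_j x_j\big) - \sum_{j\ne i}\bar v_j x^*_j \ge 0$, and losers pay $0$. Comparing $x^*$ with the allocation that maximizes $\sum_{j\ne i}\bar v_j x_j$ (a feasible member of $X$) and discarding the nonnegative term $\bar v_i x_i$ from that maximizer, one gets $\pi_i \le \bar v_i x^*_i \le \bar v_i \le B_i$. Thus payments respect the true budgets, so the utilities are finite, and the winner's true utility is $v_i - \pi_i \ge \bar v_i - \pi_i \ge 0$, giving individual rationality.

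For incentive compatibility, fix the reports of the other agents. Agent $i$'s report $(v_i', B_i')$ enters the mechanism only through the single number $t := \min\{v_i', B_i'\}$, and as a function of $t$ the $\{0,1\}$-valued allocation is monotone, hence a threshold rule: $i$ wins iff $t \ge \theta$ for some $\theta \in [0,\infty]$ fixed by $v_{-i}, B_{-i}$, and by Myerson's payment formula the induced payment is $\theta$ when $i$ wins and $0$ otherwise. So against this fixed $\theta$ agent $i$ can obtain true utility $0$ (by making $t < \theta$) or $v_i - \theta$ (by making $t \ge \theta$), the latter replaced by $-\infty$ when $\theta > B_i$. If $\theta < v_i$ and $\theta \le B_i$, then $\theta \le \min\{v_i, B_i\} = \bar v_i$, so the truthful report $t = \bar v_i$ wins and achieves the best possible payoff $v_i - \theta > 0$; in every other case winning yields at most $0$ (or $-\infty$), and since $\bar v_i \le v_i$ and $\bar v_i \le B_i$, whenever $v_i \le \theta$ or $B_i < \theta$ the truthful report falls on the losing side of the threshold (possibly tying, which is harmless), giving payoff $0$, which is then optimal. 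Hence truthfulness is dominant, identically in the public- and private-budget versions.

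The one subtle point — and the step I expect to be the crux — is precisely this last case analysis. It is \emph{not} an immediate corollary of textbook VCG truthfulness, because the number the agent effectively feeds to the VCG engine is the clipped value $\bar v_i = \min\{v_i, B_i\}$, while the agent's real utility is measured with the unclipped $v_i$ and collapses to $-\infty$ as soon as the payment exceeds the true budget $B_i$. The role of the clipping is to simultaneously keep the agent from ever overpaying (so that the $-\infty$ branch is never triggered under truthful play) and keep the reported number high enough that the agent still wins exactly when winning is profitable; verifying that both of these hold is the heart of the argument.
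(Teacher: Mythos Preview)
Your argument is correct. The paper itself simply writes ``The proof is trivial'' and gives no details, so your write-up is strictly more informative than the paper's own treatment. In particular, you put your finger on exactly the point that makes the statement not a literal corollary of textbook VCG: the agent's payoff is computed with the unclipped $v_i$ (and with the $-\infty$ branch if the budget is exceeded), while the number fed into VCG is the clipped $\bar v_i$. Your threshold/case analysis cleanly shows that clipping never causes a profitable win to be missed and never triggers an over-budget payment, and the payment bound $\pi_i \le \bar v_i \le B_i$ you derive from the VCG externality formula is the right way to verify budget feasibility. Nothing to fix.
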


The proof is trivial. This slightly generalizes to other simple environments of
interest, for example, matching markets, where there are $n$ agents and $n$
indivible items and each agent $i$ has a value $v_{ij}$ for item $j$ and
possible outcomes are perfect matchings. Running VCG on $\bar{v}_{ij} = \min
\{v_{ij}, B_i \}$ provides an incentive compatible mechanism that also exactly
approximated $\csw^*$.

This technique, however, does not generalize past those few special cases:

\begin{example}
Consider the problem of allocating $1$ divisible good among $n$ agents. One
possible VCG-reduction is to run VCG assuming that the agents values are
$\bar{v}_i(x_i) = \min \{B_i, v_i \cdot x_i\}$. This approach breaks since the
resulting mechanism is not monotone (Figure \ref{fig:xstar}) and hence not
incentive compatible. Other VCG approach is to assume agents are quasi-linear
with values $\bar{v}_i = \min\{v_i, B_i\}$. This  approach generates an
incentive compatible mechanism, but fails to produce non-trivial approximation
guarantees. Consider for example on agent with $B_1 = 2$ and $v_1 = 2$ and
other $n-1$ agents with $v_i = 1000$ and $B_i = 1$. The reduction gives an
auction between quasi-linear agents with $\bar{v}_1 = 2$ and $\bar{v}_i = 1$
for $i=2..n$. The auction, therefore, allocates the entire item to player $1$,
generating liquid welfare $\csw(x) = 2$, while $\csw^* = O(n)$. The same
example works for allocating a large number of identical indivisible goods.
\end{example}

\section{A First $2$-Approximation: The Clinching Auction}\label{sec:clinching}

In the previous section we defined our proposal for an efficiency measure in
budgeted settings: the liquid welfare objective $\csw$. The second item in
the desiderata for a new efficiency measure is that it is achievable, i.e., it
could be optimized or well-approximated by an incentive compatible mechanism.
In this section we show a mechanism that provides a $2$-approximation for
the liquid welfare, while still producing Pareto-efficient outcomes. The
mechanism we use is the Adaptive Clinching Auction \cite{dobzinski12,
Bhattacharya10, goel13}. In the next section we will provide another truthful
auction that provides a $2$-approximation, and show that with respect to the
liquid welfare the new auction is actually better on an instance-by-instance
basis.

We begin by reviewing the auction. The Adaptive Clinching Auction is described
by means of an ascending price procedure: we consider a price clock, whose
price $p$ starts at zero and gradually ascends. Let's say that the clock
ascends in increments of $\epsilon$. Initially, the good is un-allocated and
each player has their entire initial budget. For each price, we calculate how
much each player would like to acquire at each given price, which is his
remaining budget over the price $p$ if the price is below his marginal value.
We call such value the \emph{demand}. For each player $i$ we let player $i$
acquire at price $p$ an amount $\delta_i$ of the good corresponding to the
difference between the un-allocated amount and the sum of demands of the other
players. Then $\delta_i$ is allocated to player $i$ and his budget is
subtracted by $p\cdot \delta_i$. In the limit as $\epsilon$ goes to zero, the
auction is decribed by means of the following ascending price procedure:

\begin{defn}[Adaptive Clinching Auction]\label{defn:clinching_auction}
Consider one unit of a divisible good to be sold to $n$ agents with valuations
$v_1, \hdots, v_n$ per
unit and budgets $B_1, \hdots, B_n$. We assume $v_i \neq v_j$ for each $i \neq
j$ \footnote{If two values are equal, perturb the values to $v_i - \epsilon^i$
and take the limit as  $\epsilon \rightarrow 0$. We refer to the appendix of
Goel, Mirrokni and Paes Leme \cite{goel13} for a detailed description of the
clinching auction when two values are the same without resorting to
perturbations of the initial input.}. The final outcome is calculated by means
of an ascending price procedure which can be cast as a differential equation:
define functions $x_i(p), \pi(p)$ defined for each $p \in \R_+$. Also, define
the auxiliary function $S(p) = 1 - \sum_i x_i(p)$ which represents the
\emph{remnant supply} and define the sets $A(p)$ and $C(p)$ which we call the
\emph{active set} and the \emph{clinching set} respectively:
$$A(p) = \{i; p < v_i\} \qquad C(p) = \{i \in A(p); S(p) = \textstyle\sum_{j \in
A(p) \setminus i} \textstyle\frac{B_j - \pi_j(p)}{p} \}$$
Now, $x_i(p)$ and $\pi_i(p)$ are defined by means of the following rules:
\begin{itemize}
 \item for $p \notin \{v_1,\hdots, v_n\}$, $x_i(p), \pi_i(p)$ are
differentiable and the derivatives for $i \in C(p)$ are given by $\partial_p
x_i(p) = S(p)/p$ for $\partial_p \pi_i(p) = S(p)$ and for $i \notin C(p)$
$\partial_p x_i(p) = \partial_p \pi_i(p) = 0$.
 \item given a function $f$ and a price $p$, we define by $f(p-) = \lim_{x
\uparrow p} f(x)$ and  $f(p+) = \lim_{x
\downarrow p} f(x)$. So, for $p = v_i$, we define: $\delta^i_j = \left[
S(v_i-) - \sum_{k \in A(v_i) \setminus j} \frac{B_k - \pi_j(v_i-)}{v_i}
\right]^+$ and $x_j(v_i +) = x_j(v_i) = x_j(v_i-) + \delta_j^i$, $\pi_j(v_i+) =
\pi_j(v_i) = \pi_j(v_i -) + v_i \cdot \delta_j^i$ for $j \in A(v_i)$ and
$x_j(v_i +) = x_j(v_i) = x_j(v_i-)$, $\pi_j(v_i+) =
\pi_j(v_i) = \pi_j(v_i -)$ for $j \notin A(v_i)$.
\end{itemize}

The final allocation and payments of the clinching auction are given by
$\lim_{p \rightarrow \infty} x_i(p)$ and $\lim_{p \rightarrow \infty}
\pi_i(p)$ respectively. Since the values of $x_i(p)$ and $\pi_i(p)$ are
constant for $p > \max_i v_i$, the limit is reached for finite price.
\end{defn}

Bhattacharya et al \cite{Bhattacharya10} show that the procedure described
above defines an incentive compatible, individually rational auction and that
the outcomes are always Pareto-efficient. Before proving that the clinching
auction is a good approximation to $\csw^*$ we define the concept of
\emph{clinching interval}, which will be central in our proof. The clinching
interval corresponds to to the prices for which the clinching auction allocates
goods. Formally:

\begin{defn}[Clinching Interval]\label{defn:clinching_interval}
Given agents with valuations $v_i$ and budgets $B_i$, let
$S(p)$ be the remnant
supply function in Definition \ref{defn:clinching_auction}. Let $p_0 = \inf \{p;
S(p) < 1\}$ and $p_f = \sup \{p; S(p)
> 0\}$. We call the interval $[p_0, p_f]$, the clinching interval.
\end{defn}

We will also use the following lemma from Bhattacharya et al
\cite{Bhattacharya10} (Lemmas 3.3, 3.4 and 3.5 in their paper) :

\begin{lemma}\label{lemma:clinching_properties}
If the second highest value is below the budget of the highest value agent,
the entire good is allocated to the highest value agent. Otherwise, there
is some agent $k$ for which the final price $p_f = v_k$. Moreover the following
facts
hold:
\begin{itemize}
 \item  $\forall j, x_j > 0 \Rightarrow v_j \geq v_k$.
 \item  $\forall j,v_j > v_k \Rightarrow \pi_j = B_j$.
 \item  $\forall i,j, v_i < v_k < v_j \Rightarrow \delta_j^i = B_i / v_i$ and
$\delta_j^k =
(B_k -
\pi_k) / v_k$.
\end{itemize}
\end{lemma}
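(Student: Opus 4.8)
The plan is to unfold the defining ODE of the clinching auction (Definition~\ref{defn:clinching_auction}) and track the remnant supply $S$. Two observations drive everything. First, on any price interval between consecutive values $v_j$ the functions are differentiable and $S'(p)=-\sum_{i\in C(p)}\partial_p x_i(p)=-|C(p)|\,S(p)/p$; this linear ODE has solutions that never cross $0$, so $S$ stays strictly positive there and can drop to $0$ only at a departure price $p=v_j$. Hence if $S$ ever vanishes it does so at some $p=v_k$, and then $p_f=v_k$. Second, a still-active agent $i$ can belong to $C(p)$ only if $S(p)=\sum_{l\in A(p)\setminus i}\frac{B_l-\pi_l(p)}{p}$, which for every active $l\ne i$ forces $\frac{B_l-\pi_l(p)}{p}\le S(p)\le 1$, i.e.\ $\pi_l(p)\ge B_l-p$; chasing the first price at which \emph{any} agent clinches (before which $S\equiv 1$ and all $\pi_l\equiv 0$) one sees that an agent $i\ne 1$ can enter $C(p)$ only after the price has reached $B_1$, since agent $1$ is active with full budget until then.

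\emph{The degenerate case.} If $v_2<B_1$, every agent $j\ge 2$ leaves the auction before the price reaches $B_1$, hence never clinches; only agent $1$ ever clinches, and since it remains active up to $v_1>v_2$ while supply is positive it eventually acquires every unit -- by the time agent $2$ (its last competitor) departs at price $v_2$, where $A(v_2)=\{1\}$ and the jump rule gives $\delta^2_1=[S(v_2-)-0]^+=S(v_2-)$, the supply is gone and agent $1$ can afford it (its total payment is at most $v_2<B_1$). So $x_1=1$.

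\emph{The pivot.} Now suppose $v_2\ge B_1$. Because $A(v_2)=\{1\}$, whatever supply remains just before price $v_2$ is clinched by agent $1$ at that jump, so $S(v_2)=0$; by the first observation $S$ first hits $0$ at some departure price, which we name $v_k$, and $p_f=v_k\le v_2$. For \textbf{Fact 1}: an agent with $v_j<v_k$ departs at $v_j<p_f$ while $S>0$ on $(0,v_j]$, and by the second observation it is kept out of $C(p)$ by the still-active higher-value agents, so $x_j=x_j(v_j)=0$. For \textbf{Fact 3}: when such a lower-value agent $i$ departs it has (by Fact 1) clinched nothing, so it still holds $B_i$ and demands $B_i/v_i$, which is exactly the supply the jump rule releases, $\delta^i_j=B_i/v_i$; and when $k$ departs at $v_k$ its remaining demand $(B_k-\pi_k)/v_k$ equals the remnant $S(v_k-)$ clinched at that jump, giving $\delta^k_j=(B_k-\pi_k)/v_k$. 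For \textbf{Fact 2}: an agent with $v_j>v_k$ is active at every price up to $v_k=p_f$; if it had $\pi_j(v_k)<B_j$ it would still have positive demand at the closing price, but then it must have been clinching while $S$ ran down to $0$ (only agents that have exhausted their budget can fail to ``use up'' their demand), which spends it down to $\pi_j=B_j$.

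I expect the real work to be in the second observation and in Fact~2 -- formalizing that lower-value agents never enter the clinching set before higher-value ones, and that any agent still active at the closing price has exhausted its budget. Both require an induction on the order in which agents begin clinching together with careful bookkeeping of the clinching ODE, which is precisely the content of the analysis in \cite{Bhattacharya10}.
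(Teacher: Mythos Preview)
The paper does not prove this lemma at all: it is quoted verbatim from Bhattacharya et al.\ \cite{Bhattacharya10} (their Lemmas 3.3--3.5), as the sentence immediately preceding the statement says. So there is no ``paper's own proof'' to compare against; your sketch is already more than the paper offers, and you correctly identify that the heavy lifting---the induction establishing that lower-value agents never enter the clinching set before higher-value ones, and that every agent still active at the closing price has exhausted its budget---lives in \cite{Bhattacharya10}.

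One small gap worth flagging in your sketch of Fact~3: to conclude $\delta_j^i = B_i/v_i$ you are implicitly using that $j$ is already in the clinching set $C(v_i-)$ at the moment $i$ departs (so that $S(v_i-)=\sum_{l\in A(v_i-)\setminus j}\frac{B_l-\pi_l}{v_i}$ and removing $i$'s term $B_i/v_i$ gives exactly the jump). This is true for $v_i\in[p_0,p_f)$, which is the only range used in the downstream $2$-approximation proof, but it is precisely the ``order of clinching'' induction you defer to \cite{Bhattacharya10}; without it the subtraction argument does not go through. The rest of your outline (the ODE for $S$ showing it can vanish only at a departure price, and the budget-bound observation that forces $p\ge B_1$ before any non-top agent clinches) is sound.
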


Now we are ready to state and prove our main Theorem:

\begin{theorem}
The clinching auction is a $2$-approximation to the liquid welfare
objective. In other words, given $n$ agents with values per unit $v_i$ and
budgets $B_i$, let $x, \pi$ be the outcome of the clinching auction for such
input. Then, $\csw(x) \geq \frac{1}{2} \csw^*$.
\end{theorem}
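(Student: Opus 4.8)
The plan is a term-by-term comparison of the clinching outcome $(x,\pi)$ with the greedy optimum $\bar x^{*}$ of Lemma~\ref{lemma:x_star_format}, split into the two cases of Lemma~\ref{lemma:clinching_properties}; sort the agents so that $v_{1}>v_{2}>\dots>v_{n}$.

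\emph{Case 1: the whole good goes to agent~$1$} (the case $v_{2}<B_{1}$ of Lemma~\ref{lemma:clinching_properties}). Then $\csw(x)=\min(v_{1},B_{1})$. If $v_{1}\le B_{1}$ then already $\csw^{*}\le\sum_{i}v_{i}x^{*}_{i}\le v_{1}\sum_{i}x^{*}_{i}\le v_{1}=\csw(x)$. If $v_{1}>B_{1}$, then in the optimum agent~$1$ contributes at most $B_{1}$, while every other agent has value $\le v_{2}<B_{1}$ and so the others contribute at most $v_{2}\sum_{i\ge2}x^{*}_{i}\le v_{2}<B_{1}$; hence $\csw^{*}<2B_{1}=2\csw(x)$.

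\emph{Case 2: there is an agent~$k$ with $p_{f}=v_{k}$.} Put $H=\{j:v_{j}>v_{k}\}$. By Lemma~\ref{lemma:clinching_properties}, agents of value $<v_{k}$ get nothing and $\pi_{j}=B_{j}$ for $j\in H$. The first thing I would establish is that, because $S(p)=0$ for every $p>p_{f}=v_{k}$, all of the allocation — both the smooth part and every jump — occurs at prices $\le v_{k}$, so $\pi_{i}\le v_{k}x_{i}$ for all $i$; in particular $x_{j}\ge B_{j}/v_{k}$ for $j\in H$, whence $v_{j}x_{j}\ge v_{k}x_{j}\ge B_{j}$ and the contribution of each $j\in H$ to $\csw(x)$ is exactly $B_{j}$. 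Since the good is fully sold ($S(v_{k}^{+})=0$) we have $x_{k}=1-\sum_{j\in H}x_{j}$, giving
\[
\csw(x)=\sum_{j\in H}B_{j}+\min(v_{k}x_{k},B_{k}),\qquad v_{k}x_{k}\le v_{k}-\sum_{j\in H}B_{j}.
\]
On the optimum side, Lemma~\ref{lemma:x_star_format} fills agents greedily by value: the $H$-agents are at the top and their requirement $\sum_{j\in H}B_{j}/v_{j}\le\sum_{j\in H}x_{j}\le1$ fits, so each is fully served and contributes $B_{j}$, while the leftover $R:=1-\sum_{j\in H}B_{j}/v_{j}$ is given to agents of value $\le v_{k}$ and hence contributes at most $v_{k}R$. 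Therefore
\[
\csw^{*}\le\sum_{j\in H}B_{j}+v_{k}R .
\]

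It remains — and this is the heart of the argument — to prove $v_{k}R\le\sum_{j\in H}B_{j}+2\min(v_{k}x_{k},B_{k})$; plugging this into the last display gives $\csw^{*}\le2\csw(x)$ and, with Case~1, the theorem. Writing $R=x_{k}+\sum_{j\in H}(x_{j}-B_{j}/v_{j})$, the inequality says that the $H$-agents cannot clinch much more than their optimal shares $B_{j}/v_{j}$ unless the extra welfare is already accounted for by $\sum_{j\in H}B_{j}$ or by $\min(v_{k}x_{k},B_{k})$. I expect this step to be the main obstacle: the three bullet points of Lemma~\ref{lemma:clinching_properties} will not suffice, and one has to use the finer structure of the auction — the explicit jumps $\delta^{i}_{j}=B_{i}/v_{i}$ and $\delta^{k}_{j}=(B_{k}-\pi_{k})/v_{k}$, and the clinching interval $[p_{0},p_{f}]$, in particular the fact that the earliest agents to clinch are those holding the largest remaining budgets, so that ``cheap'' early clinching cannot be absorbed by an $H$-agent without $\sum_{j\in H}B_{j}$ being correspondingly large, nor by agent~$k$ without $v_{k}x_{k}$ being correspondingly large.
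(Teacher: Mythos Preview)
Your Case~1 matches the paper and is fine. The trouble is in Case~2: the inequality you reduce everything to,
\[
v_{k}R \;\le\; \sum_{j\in H}B_{j} \;+\; 2\min(v_{k}x_{k},B_{k}),
\]
is simply false. Take two agents with $v_{1}=1000$, $B_{1}=1$, $v_{2}=10$, $B_{2}=1$. Here $k=2$, $H=\{1\}$; the clinching auction has both agents start clinching at $p_{0}=1$ and proceed symmetrically until agent~$2$ drops at $p_{f}=10$, giving $x_{2}=0.495$ and hence $\min(v_{k}x_{k},B_{k})=1$. But $R=1-1/1000=0.999$, so the left side is $v_{k}R=9.99$ while the right side is $1+2=3$. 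The theorem itself is fine here ($\csw(x)=\csw^{*}=2$); what breaks is your upper bound $\csw^{*}\le\sum_{j\in H}B_{j}+v_{k}R$, which evaluates to $10.99$. The slack comes from bounding the non-$H$ contribution to $\csw^{*}$ by $v_{k}R$ without also capping by the non-$H$ budgets --- in particular by $B_{k}$ itself.

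The paper's proof avoids this by splitting at the \emph{start} of the clinching interval, $p_{0}$, rather than at $p_{f}=v_{k}$. Its upper bound is
\[
\csw^{*}\;\le\;\sum_{i:\,v_{i}>p_{0}}B_{i}\;+\;p_{0}\Bigl[1-\sum_{i:\,v_{i}>p_{0}}\tfrac{B_{i}}{v_{i}}\Bigr]^{+},
\]
which in the example above gives $2.899$ instead of $10.99$. For the lower bound the paper works with \emph{revenue}: once as $\sum_{j\in H}B_{j}+\pi_{k}$, and once by integrating $p\cdot(-\partial_{p}S)$ over $[p_{0},p_{f}]$ together with the jump terms $\delta^{i}_{j}$ from Lemma~\ref{lemma:clinching_properties}. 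Those jumps are precisely what recover the budgets $B_{i}$ of agents with $v_{i}\in[p_{0},p_{f})$ --- the agents your decomposition at $p_{f}$ loses track of --- and summing the two revenue bounds then covers every term of the $p_{0}$-based upper bound. So your instinct that the jump structure $\delta^{i}_{j}=B_{i}/v_{i}$ is needed was correct, but it enters through a revenue argument anchored at $p_{0}$, not through the allocation-based comparison at $p_{f}$ that you set up.
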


\begin{proof}
We assume for simplicity that the values of the players are all distinct. The
case where two players have the same value can be handled by
perturbing the input and considering the limit on the size of the
perturbation. We assume agents are sorted by value, i.e., $v_1 > v_2 > \hdots >
v_n$.

In the case where $v_2 \leq B_1$, the clinching
auction produces the allocation $x_1 = 1$
according to Lemma \ref{lemma:clinching_properties}. If $v_1 \leq B_1$, then
$\csw(x) = v_1 = \csw^*$, otherwise: $\csw(x) = B_1 \geq  \frac{1}{2} (B_1 +
v_2) \geq \frac{1}{2} \csw^*$.

In the remaining case, let $x,\pi$ be the outcome of the clinching auction and
let $[p_0, p_f]$ be the clinching interval. First, we give an upper bound to
$\csw^*$. Let
$\bar{x}^*$ be the outcome achieving $\csw^*$, then:
$$\csw^* = \sum_i \min(B_i, v_i \cdot \bar{x}_i^*) \leq \sum_{i; v_i
> p_0} B_i + \sum_{i; v_i \leq p_0} p_0 \bar{x}^*_i = \sum_{i; v_i > p_0}
B_i +  p_0 \left[ 1 - \sum_{i; v_i > p_0} \frac{B_i}{v_i} \right]^+$$
by the format of $x^*$ in Lemma \ref{lemma:x_star_format}. Now, we give two
lower bounds for $\csw(x)$ using Lemma
\ref{lemma:clinching_properties}. Let $p_f = v_k$. By our assumption that the
players have distinct values, there is only a single player with value $p_f$.
$$\csw(x) \geq \sum_{i} \pi_i = \sum_{i; v_i > p_f} B_i + \pi_k$$
Another way of bounding the revenue of the auction is to consider the ascending
price procedure and integrate the derivative over the supply over the price,
i.e.:
$$ \csw(x)  \geq \sum_{i} \pi_i = \int_{p_0}^{p_f} p \cdot
[-\partial_p S(p)] dp + \sum_{i; v_i \in [p_0, p_f]} v_i \cdot
[S(v_i-)-S(v_i)]$$
The first term represents the integral of the price over the derivative of the
remnant supply. Since the remnant supply is decreasing, we need to integrate
minus the value of the derivative. The sum in the second terms, takes care of
the discontinuities in the supply function between $v_i-$ (just before $v_i$)
and $v_i$. Now, we know by the definition of $\delta_j^i$ that
$S(v_i-)-S(v_i) = \sum_{j \in C(i)} \delta_j^i $, which gives us:
$$ \csw(x)  \geq p_0 \int_{p_0}^{p_f} [-\partial_p S(p)] dp + \sum_{i; v_i
\in [p_0, p_f)} v_i \cdot \frac{B_i}{v_i} \cdot \abs{C(v_i)} + v_k \cdot
\frac{B_k - \pi_k}{v_k} \cdot \abs{C(v_k)} $$
Now, we note that $\int_{p_0}^{p_f} [-\partial_p S(p)] dp +  \sum_{i; v_i
\in [p_0, p_f)}  \frac{B_i}{v_i} \cdot \abs{C(v_i)} +
\frac{B_k - \pi_k}{v_k} \cdot \abs{C(v_k)} = 1$ since it corresponds to the
total variation in the supply. So, we have a weighted sum of this total
variation, where the weights are all above $p_0$. From this observation
and the fact that $\abs{C(v_i)} \geq 1$ for $v_i \geq p_0$, we can
conclude that:
$$ \csw(x)  \geq \sum_{i; v_i \in [p_0, p_f)} B_i + (B_k - \pi_k) + p_0
\left[ 1-\sum_{i;v_i \in [p_0, p_f)} \frac{B_i}{v_i} - \frac{B_k -
\pi_k}{v_k}\right]^+$$
Combining this with the first bound we got on $\csw(x)$, we get:
$$2 \csw(x) \geq \sum_{i; v_i \geq p_0} B_i + p_0
\left[ 1-\sum_{i;v_i \in [p_0, p_f)} \frac{B_i}{v_i} - \frac{B_k -
\pi_k}{v_k}\right]^+ \geq \csw^*  $$
by the first bound obtained for $\csw^*$.
\end{proof}

A direct consequence from the proof is:

\begin{corollary}[revenue]
 If the clinching auction allocates items to more then one player, then its
revenue is at least $\frac{1}{2} \cdot \csw^*$.
\end{corollary}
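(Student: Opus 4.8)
The plan is to observe that the Corollary is already essentially contained in the proof of the preceding Theorem. First, I would note that if the clinching auction allocates goods to more than one player then we must be in the ``remaining case'' of that proof, namely $v_2 > B_1$: indeed, by Lemma~\ref{lemma:clinching_properties}, whenever $v_2 \le B_1$ the whole good goes to the single highest-value agent, so two or more winners forces $v_2 > B_1$.

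Next, I would point out that in the ``remaining case'' both lower bounds derived on $\csw(x)$ are in fact lower bounds on the \emph{revenue} $\sum_i \pi_i$. This is because each of them begins with the trivial inequality $\csw(x) \ge \sum_i \pi_i$, which itself follows from individual rationality and budget-feasibility ($\pi_i \le \min(v_i x_i, B_i)$ for every $i$). Thus the two displayed estimates in the Theorem's proof read $\sum_i \pi_i \ge \sum_{i;\, v_i > p_f} B_i + \pi_k$ and, after the same manipulation of the ascending-price integral together with the identity expressing that the total variation of $S$ over $[p_0,p_f]$ equals $1$, $\sum_i \pi_i \ge \sum_{i;\, v_i \in [p_0,p_f)} B_i + (B_k - \pi_k) + p_0\big[1 - \sum_{i;\, v_i \in [p_0,p_f)} \tfrac{B_i}{v_i} - \tfrac{B_k - \pi_k}{v_k}\big]^+$.

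Finally, I would add these two estimates on $\sum_i \pi_i$ and invoke the upper bound $\csw^* \le \sum_{i;\, v_i > p_0} B_i + p_0\big[1 - \sum_{i;\, v_i > p_0} \tfrac{B_i}{v_i}\big]^+$ from the Theorem (a consequence of the format of $\bar{x}^*$ in Lemma~\ref{lemma:x_star_format}), exactly as there, to conclude $2\sum_i \pi_i \ge \sum_{i;\, v_i \ge p_0} B_i + p_0\big[1 - \sum_{i;\, v_i > p_0} \tfrac{B_i}{v_i}\big]^+ \ge \csw^*$, i.e.\ the revenue is at least $\tfrac12 \csw^*$. I do not expect a genuine difficulty here; the only points requiring care are confirming that ``more than one winner'' coincides exactly with the case $v_2 > B_1$ treated in the Theorem, and checking that every inequality used in that case was written starting from $\sum_i \pi_i$ and not from some intermediate quantity lying strictly between $\sum_i \pi_i$ and $\csw(x)$.
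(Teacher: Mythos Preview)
Your proposal is correct and is precisely the ``direct consequence from the proof'' that the paper has in mind: the paper does not spell out any separate argument for the Corollary, and your observation that in the remaining case both displayed lower bounds on $\csw(x)$ are actually equalities beginning with $\sum_i \pi_i$ is exactly the point. Your identification of ``more than one winner'' with the case $v_2 > B_1$ via Lemma~\ref{lemma:clinching_properties} is also the intended reduction.
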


The following example shows that our analysis is tight:

\begin{example}
 Consider two agents with $v_1 = 1$, $B_1 = \infty$ and $v_2 = \alpha \gg 1$,
$B_2 = 1$ and one unit of a divisible good. For such parameters the
allocation $\bar{x}^* = \left(1-\frac{1}{\alpha}, \frac{1}{\alpha}
\right)$ provides the optimal liquid welfare $\csw^* = 2-\frac{1}{\alpha}$.
The clinching auction generates allocation $x=(0,1)$ for which $\csw(x) = 1$.
As $\alpha \rightarrow \infty$, the ratio $\frac{\csw^*}{\csw(x)} \rightarrow
2$.
\end{example}

\section{A $2$-Approximation via Market Equilibrium}

We defined a quantifiable measure of efficiency (Section
\ref{sec:preliminaries}) and showed it can be approximated by an
incentive-compatible mechanism (Section \ref{sec:clinching}). The remaining
item in the list of desiderata was to show that our efficiency measure allows
for different designs. Here we show that we have ``an extra bunny in the hat'',
an auction that also achieves a $2$-approximation to the liquid welfare
objective and is \emph{not} based on Ausubel's clinching technique. Instead, it
is based on the concept of Market Equilibrium.

Borrowing inspiration from general equilibrium theory, consider a market with $n$ buyers each endowed
with $B_i$ dollars and willing to pay $v_i$ per unit for a certain divisible
good. This is the special case where there is only one product in the
market. In this case, a price $p$ is called a \emph{market clearing price} if
each buyer can be assigned an optimal basket of goods (in the particular of a
single product, an optimal amount of the good) such that there is no surplus or
deficiency of any good. Observe that there is one such price and
that allocations can be computed once the price is found. Our
Uniform Price Auction simply computes the market clearing price and allocates
according to it. This defines the allocation. The payments are computed using
the Myerson's formula for this allocation and happen to be different than the clearing price.

\begin{defn}[Uniform Price Auction]\label{defn:uniform_price_auction}
Consider $n$ agents with values $v_1 \geq \hdots \geq v_n$ (i.e., ordered without loss of generality) and budgets $B_i$.
Consider the auction that allocates one unit of a divisible good in the
following way: let $k$ be the maximum integer such that $\sum_{j=1}^k B_j \leq
v_k$, then:
\begin{itemize}
 \item Case I: if $\sum_{j=1}^k B_j > v_{k+1}$ allocate $x_i =
\frac{B_i}{\sum_{j=1}^k B_j}$ for $i=1,\ldots,k$ and nothing for the remaining players.
 \item Case II: if $\sum_{j=1}^k B_j \leq v_{k+1}$ allocate $x_i =
\frac{B_i}{v_{k+1}}$ for $i=1,\ldots,k$, $x_{k+1} = 1-\sum_{j=1}^k x_j$ and nothing for
the remaining players.
\end{itemize}
Payments are defined through Myerson's integral
(Lemma \ref{lemma:myerson}).
\end{defn}

Case I corresponds to the case where the market clearing price
of the Fisher Market instance is $p = \sum_{j=1}^k B_j$. Case II
coresponds to the case where the Market clearing price is $p = v_{k+1}$. First
we show that this auction induces an incentive-compatible auction that does not
exceed the budgets of the agents. (Proofs can be found in appendix
\ref{appendix:missing_proofs}.) Then we show that it is a $2$-approximation to
the liquid welfare benchmark.

\begin{lemma}[Monotonicity]\label{lemma:uniform_monotonicity}
 The allocation function of the Uniform Price Auction is monotone, i.e., $v_i
\mapsto x_i(v_i, v_{-i})$ is non-decreasing.
\end{lemma}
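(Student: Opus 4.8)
The plan is to reinterpret the Uniform Price Auction as computing a market‑clearing (Walrasian) price for the single‑good Fisher market and then exploit the structure of that allocation. First I would record the following restatement of Definition~\ref{defn:uniform_price_auction}: there is a price $p^*$ --- namely the unique price with $\sum_{j:\,v_j>p^*} B_j/p^* \le 1 \le \sum_{j:\,v_j\ge p^*} B_j/p^*$, equivalently $p^*=\min\{p:\ \sum_{j:\,v_j>p} B_j/p\le 1\}$ --- such that every agent with $v_j>p^*$ receives $B_j/p^*$, every agent with $v_j<p^*$ receives nothing, and the (at most one) agent $j^*$ with $v_{j^*}=p^*$ receives the residual $1-\sum_{j:\,v_j>p^*}B_j/p^*\in[\,0,\ B_{j^*}/p^*\,]$. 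Matching this to the two cases of the definition (Case~I is $p^*=\sum_{j\le k}B_j$ with no residual agent, Case~II is $p^*=v_{k+1}$ with $j^*=k+1$) is a direct check of the defining inequalities using the maximality of $k$. As elsewhere in the paper I would assume the values are distinct, recovering the general case by perturbing and taking limits.

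Now fix a player $i$ and the other bids $v_{-i}$, write $p(t)$ for the clearing price when $i$ bids $t$, and set $f_{-i}(p)=\sum_{j\ne i:\,v_j>p}B_j/p$ (a non‑increasing, right‑continuous function of $p$). By the restatement, $x_i(t)=0$ when $t<p(t)$, $x_i(t)=B_i/p(t)$ when $t>p(t)$, and $x_i(t)=1-f_{-i}(t)$ (the residual, with $i$ in the role of $j^*$) when $t=p(t)$. I would then prove three monotonicity facts. (A) $p(\cdot)$ is non‑decreasing: $D_t(p):=f_{-i}(p)+\one[t>p]B_i/p$ is pointwise non‑decreasing in $t$, so $\{p:\,D_t(p)\le 1\}$ shrinks and its minimum $p(t)$ grows. (B) $\{t:\,t<p(t)\}$ is downward closed: if $t<p(t)$ then $D_t(t)=f_{-i}(t)>1$, hence $D_{t'}(t)=f_{-i}(t)>1$ for every $t'\le t$, which (since $D_{t'}$ is non‑increasing) forces $p(t')>t\ge t'$. (C) $\{t:\,t>p(t)\}$ is upward closed and $p(\cdot)$ is constant on it, equal to $p^*_R:=\min\{p:\,f_{-i}(p)+B_i/p\le 1\}$: if $t>p(t)$ then $D_{t'}$ and $D_t$ agree on $[\,0,p(t)\,]$ for every $t'>t$, so $p(t')\le p(t)$, which with (A) gives $p(t')=p(t)<t'$, so $t'$ again lies in this regime.

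It then remains to compare $x_i(t)$ with $x_i(t')$ for $t<t'$ according to which of the three regimes each lies in. Combinations like (below, below), (below, middle), (below, above), (middle, middle) and (above, above) are immediate from (A)--(C), from $f_{-i}(\cdot)$ non‑increasing (so $1-f_{-i}(\cdot)$ is non‑decreasing), and from $f_{-i}(t)\le 1$ on the middle regime; the combinations (middle, below), (above, below), (above, middle) are vacuous by (B) and (C). The one combination that genuinely uses the clearing inequalities is $t=p(t)$ and $t'>p(t')$, where $x_i(t)=1-f_{-i}(t)$ and $x_i(t')=B_i/p^*_R$: here $t=p(t)$ forces $f_{-i}(p)+B_i/p>1$ for all $p<t$, hence $t\le p^*_R$, and therefore $1-f_{-i}(t)\le 1-f_{-i}(p^*_R)\le B_i/p^*_R$, the last inequality being the clearing inequality $f_{-i}(p^*_R)+B_i/p^*_R\le 1$. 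I expect this ``gluing at the high end'' --- bounding the residual allocations by the eventual constant allocation $B_i/p^*_R$ --- together with the bookkeeping around the residual agent when a value coincides with the clearing price, to be the only delicate points; the rest is routine. (One could instead avoid the market‑clearing language and argue directly on $k$ and on Cases~I/II of Definition~\ref{defn:uniform_price_auction}, verifying non‑decrease on each interval of $v_i$ between consecutive values of the others and at the finitely many breakpoints, but the argument above seems cleaner.)
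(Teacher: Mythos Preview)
Your market-clearing reformulation is a genuinely different route from the paper's proof, which works directly from Definition~\ref{defn:uniform_price_auction}: the paper fixes the sorted profile and, for each possible rank of player $i$ (among the top $k$, the $(k{+}1)$-st, or below), checks what happens as $v_i$ increases, handling separately the two breakpoints where $v_{k+1}$ reaches $\sum_{j\le k+1} B_j$ or reaches $v_k$. Your three-regime decomposition (below / residual / above the clearing price) packages the same content more uniformly and makes the role of the clearing price explicit.

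That said, there is an actual error in your ``gluing at the high end'' step. You write
\[
1-f_{-i}(t)\ \le\ 1-f_{-i}(p^*_R)\ \le\ B_i/p^*_R,
\]
and justify the last inequality by $f_{-i}(p^*_R)+B_i/p^*_R\le 1$. But that inequality rearranges to $1-f_{-i}(p^*_R)\ \ge\ B_i/p^*_R$, the opposite direction; and indeed when $p^*_R$ coincides with some $v_j$ ($j\ne i$), the function $g(p)=f_{-i}(p)+B_i/p$ jumps down at $p^*_R$ and $1-f_{-i}(p^*_R)$ can strictly exceed $B_i/p^*_R$. The fix is to invoke the \emph{other} clearing inequality you already recorded in your restatement: from $g(p)>1$ for all $p<p^*_R$ one obtains the left-limit bound $f_{-i}(p^*_R{-})+B_i/p^*_R\ge 1$, hence $1-f_{-i}(p^*_R{-})\le B_i/p^*_R$. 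For $t$ in the middle regime with $t<p^*_R$ this yields $1-f_{-i}(t)\le 1-f_{-i}(p^*_R{-})\le B_i/p^*_R$ as needed; the boundary case $t=p^*_R$ either has $f_{-i}$ continuous there (so $f_{-i}(p^*_R)=f_{-i}(p^*_R{-})$ and the same bound applies) or has $p^*_R=v_j$ for some $j\ne i$ and is excluded by your distinct-values assumption. With this correction your argument goes through.
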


\begin{lemma}[Budget feasibility]\label{lemma:budget_feasibility}
The payments that make this auction incentive-compatible do not exceed the
budgets.
\end{lemma}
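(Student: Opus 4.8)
The plan is to fix a player $i$ together with the reports $v_{-i}$ of the others and to show that the Myerson payment of $i$ is at most $B_i$. By Lemma~\ref{lemma:myerson} this payment equals $\pi_i(v_i,v_{-i}) = v_i\, x_i(v_i,v_{-i}) - \int_0^{v_i} x_i(u,v_{-i})\,du$, and since by Lemma~\ref{lemma:uniform_monotonicity} the map $u\mapsto x_i(u,v_{-i})$ is monotone non-decreasing with value $0$ at $u=0$, integration by parts rewrites this as the Stieltjes integral $\pi_i(v_i,v_{-i}) = \int_0^{v_i} u\, d x_i(u,v_{-i})$. Hence it suffices to understand, as the reported value $u$ of player $i$ sweeps over $[0,v_i]$, at which values $u$ the allocation $x_i(u,v_{-i})$ actually changes, and how large $u$ can be there.

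The core of the proof is a description of the curve $u\mapsto x_i(u,v_{-i})$ in market-clearing terms. Let $g(p)=\sum_{j\neq i:\,v_j>p} B_j/p$ be the aggregate demand of the other players at price $p$; it is non-increasing, tends to $+\infty$ as $p\to 0^+$, and tends to $0$ as $p\to\infty$. Let $\bar p$ be the price at which the other players alone clear the unit supply ($g(\bar p)=1$, with the natural convention if $g$ jumps across $1$), and let $q$ be the price at which $g(q)+B_i/q=1$ (such $q$ exists and satisfies $q\geq\bar p$, since $p\mapsto g(p)+B_i/p$ is non-increasing and already exceeds $1$ at $\bar p$). Reading off the two cases of Definition~\ref{defn:uniform_price_auction}, one verifies that: (i) for $u\leq\bar p$, player $i$'s value lies below the clearing price, so $x_i(u,v_{-i})=0$; (ii) for $\bar p<u<q$, player $i$ is exactly the marginal winner $k+1$ of Case~II at clearing price $u$, so $x_i(u,v_{-i})=1-g(u)$; and (iii) for $u\geq q$, the clearing price is pinned at $q$ and player $i$ is a full-budget winner, so $x_i(u,v_{-i})=B_i/q$, independent of $u$. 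In particular the measure $d x_i(\cdot,v_{-i})$ is supported on $[\bar p,q]$ and has total mass $\lim_{u\to\infty}x_i(u,v_{-i})=B_i/q$.

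With this in hand the bound drops out: $\pi_i(v_i,v_{-i}) = \int_0^{v_i} u\, d x_i(u,v_{-i}) \leq \int_{[\bar p,\,q]} u\, d x_i(u,v_{-i}) \leq q\cdot(B_i/q) = B_i$, where the first inequality uses $u\geq 0$ and the second bounds $u$ by its largest value $q$ on the support together with the total-mass computation. (Equivalently, via the explicit formula: for $v_i\geq q$ one computes $\pi_i = B_i - \int_{\bar p}^{q}(1-g(u))\,du \leq B_i$, using $g\leq 1$ on $[\bar p,q]$; for $\bar p<v_i<q$ one gets $\pi_i \leq q\, x_i(v_i,v_{-i}) = q(1-g(v_i)) \leq q(1-g(q)) = B_i$.)

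The step I expect to be the main obstacle is the structural claim (i)--(iii): in particular, checking that once player $i$ has become a full-budget winner the clearing price $q$ does not move as $v_i$ grows further (so that $x_i$ is eventually constant in $v_i$), and handling ties --- the case $v_i=v_j$ for some other $j$, and the borderline configurations where $\bar p$ or $q$ falls exactly at a jump of the aggregate demand $g$. As elsewhere in the paper these degenerate inputs are treated by perturbing the values by infinitesimals and passing to the limit; granting that, (i)--(iii) reduce to a direct case analysis of the allocation rule of Definition~\ref{defn:uniform_price_auction}.
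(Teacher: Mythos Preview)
Your proof is correct, and the key observation is the same as the paper's: once the reported value $u$ exceeds a threshold (your $q$), the allocation freezes at $B_i/q$, so the Myerson payment is at most $q\cdot(B_i/q)=B_i$. The routes differ, however. You work globally: you parametrize the entire curve $u\mapsto x_i(u,v_{-i})$ via the aggregate-demand function $g$, pass to the Stieltjes form $\pi_i=\int u\,dx_i$, and bound by localizing the support of $dx_i$ to $[\bar p,q]$ with total mass $B_i/q$. The paper works locally: it fixes the realized profile $v$, reads off the clearing price $p$ directly from Definition~\ref{defn:uniform_price_auction} ($p=\sum_{j\le k}B_j$ in Case~I, $p=v_{k+1}$ in Case~II), and for any winner $i\le k$ simply notes that $x_i(\cdot,v_{-i})$ is constant for all reports $\ge p$, whence $\pi_i\le p\cdot x_i=p\cdot(B_i/p)=B_i$; for the marginal player $k{+}1$ in Case~II it uses the crude bound $\pi_{k+1}\le v_{k+1}x_{k+1}=v_{k+1}-\sum_{j\le k}B_j<B_{k+1}$. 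Your approach buys a clean, uniform market-clearing picture of the whole allocation curve; the paper's approach buys brevity and sidesteps the tie-breaking and jump-of-$g$ issues you flag at the end, since it never needs to solve for $\bar p$ or $q$ globally.
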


\begin{theorem}
 The Uniform Price Auction is an incentive compatible $2$-approximation to the
liquid welfare objective.
\end{theorem}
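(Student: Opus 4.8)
The incentive compatibility and budget‑respecting properties are already handed to us by Lemma~\ref{lemma:uniform_monotonicity}, Lemma~\ref{lemma:budget_feasibility} and Myerson's characterization (Lemma~\ref{lemma:myerson}), so the plan is to devote the proof entirely to the approximation guarantee $\csw(x)\ge\frac{1}{2}\csw^*$, where $x$ is the allocation produced by the Uniform Price Auction. As elsewhere in the paper I would first assume all values distinct, recovering the general case by perturbing the input and taking limits. Let $k$ and the two cases be as in Definition~\ref{defn:uniform_price_auction}, and write $p$ for the market clearing price, i.e.\ $p=\sum_{j=1}^k B_j$ in Case~I and $p=v_{k+1}$ in Case~II. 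The single observation that drives everything is that in both cases $x_i=B_i/p$ for every $i\le k$, and moreover $v_i\ge v_k\ge p$ for these agents --- in Case~I because $\sum_{j\le k}B_j\le v_k$ by the definition of $k$, and in Case~II because $v_k\ge v_{k+1}=p$.

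The first step is the lower bound $\csw(x)\ge p$. Since $v_ix_i=v_iB_i/p\ge B_i$ for $i\le k$, each such agent contributes $\min(v_ix_i,B_i)=B_i$, so agents $1,\dots,k$ already contribute $\sum_{j\le k}B_j$; in Case~I this equals $p$ and we are done. In Case~II I additionally account for agent $k+1$: maximality of $k$ gives $\sum_{j=1}^{k+1}B_j>v_{k+1}$, i.e.\ $B_{k+1}>v_{k+1}-\sum_{j\le k}B_j=v_{k+1}x_{k+1}$, so agent $k+1$'s contribution is exactly $v_{k+1}x_{k+1}=v_{k+1}-\sum_{j\le k}B_j$, and the two parts telescope to $v_{k+1}=p$.

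The second step is the upper bound $\csw^*\le 2p$, for which I would plug in the explicit optimal allocation $\bar{x}^*$ from Lemma~\ref{lemma:x_star_format} and bound $\min(v_i\bar{x}^*_i,B_i)$ by $B_i$ for agents with $v_i\ge p$ and by $v_i\bar{x}^*_i$ for the rest. In Case~I one has $\{i:v_i\ge p\}=\{1,\dots,k\}$ (because $v_{k+1}<\sum_{j\le k}B_j=p$), so $\csw^*\le\sum_{j\le k}B_j+\sum_{i>k}v_i\bar{x}^*_i\le p+v_{k+1}\sum_{i>k}\bar{x}^*_i\le p+v_{k+1}<2p$, using $\sum_i\bar{x}^*_i\le1$. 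In Case~II the point is that $\sum_{j\le k}B_j/v_j\le1$ --- each $v_j\ge v_{k+1}\ge\sum_{j\le k}B_j$ --- so the optimal allocation of Lemma~\ref{lemma:x_star_format} hands agent $j\le k$ its full demand $B_j/v_j$, contributing $B_j$, while agents $k+1,k+2,\dots$ share the leftover supply $R=1-\sum_{j\le k}B_j/v_j\le1$ at values at most $v_{k+1}$, contributing at most $v_{k+1}R\le v_{k+1}=p$; together with $\sum_{j\le k}B_j\le v_{k+1}=p$ this yields $\csw^*\le 2p$. Combining the two steps gives $\csw^*\le 2p\le 2\csw(x)$.

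The main obstacle is the Case~II analysis: one must check that the optimal allocation really does saturate each of the first $k$ agents (this is exactly where the defining inequality $\sum_{j\le k}B_j\le v_{k+1}$ of Case~II is used), and one must invoke the maximality of $k$ at the right moment both in the lower bound (to pin down agent $k+1$'s contribution) and, implicitly, to know that Cases~I and~II are exhaustive. The degenerate cases $k=0$ and $k=n$ (under the convention $v_{n+1}=0$) are easy to dispose of separately, and the remainder is routine bookkeeping.
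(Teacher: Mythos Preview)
Your proof is correct and follows essentially the same approach as the paper: both upper bound $\csw^*$ by $\sum_{j\le k}B_j+v_{k+1}$ and lower bound $\csw(x)$ by each of the two summands, then combine. Your organization around the single quantity $p=\max\{\sum_{j\le k}B_j,\,v_{k+1}\}$ is a clean way to package the two cases, though in Case~II the detour through Lemma~\ref{lemma:x_star_format} to verify that the first $k$ agents are saturated in $\bar{x}^*$ is unnecessary---the cruder bound $\min(v_i\bar{x}^*_i,B_i)\le B_i$ for $i\le k$ already gives $\csw^*\le\sum_{j\le k}B_j+v_{k+1}\le 2v_{k+1}$, which is all you need.
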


\begin{proof}
 Let $k$ be as in Definition \ref{defn:uniform_price_auction}. First we
establish an upper bound on $\csw^*$. If $\bar{x}^*$ is the allocation
achieving $\csw^*$, then:
$\csw^* = \sum_i \min(v_i \bar{x}_i^*, B_i) \leq \sum_{i=1}^k B_i +
\sum_{i=k+1}^n v_i \bar{x}^*_i \leq \sum_{i=1}^k B_i + v_{k+1}$.

Let $p$ be the market clearing price and $x$ the allocation of the Uniform
Price Auction. By the definition of $k$, $p \leq v_i$ for all $i \geq k$, so for
those players, $v_i x_i = v_i \frac{B_i}{p} \geq B_i$. Therefore $\csw(x) \geq
\sum_{i=1}^k B_i$. Now, we also show that $\csw(x) \geq v_{k+1}$. In Case I,
this is trivial since: $\csw(x) \geq \sum_{i=1}^k B_i > v_{k+1}$. In Case II,
$v_{k+1} \cdot x_{k+1} = v_{k+1} \cdot \left( 1-\sum_{j=1}^k \frac{B_j}{v_{k+1}}
\right) = v_{k+1} - \sum_{j=1}^k B_j < B_{k+1}$, so: $\csw(x) = \sum_{j=1}^k
B_k + v_{k+1} \cdot x_{k+1} = v_{k+1}$.

Summing up two inequalities, we have:
$\csw(x) \geq \frac{1}{2} \left[ \sum_{j=1}^k B_j + v_{k+1}\right]
\geq \frac{1}{2} \cdot \csw^*$.
\end{proof}

The same example used for showing that the analysis for the Clinching Auction
was tight can be used for showing that the analysis for the Uniform Price
Auction is tight.

\begin{example}
 Consider two agents with $v_1 = 1$, $B_1 = \infty$ and $v_2 = \alpha \gg 1$,
$B_2 = 1$ and one unit of a divisible good. We know that $\csw^* = 2 -
\frac{1}{\alpha}$. The market clearing price for this instance is $p = 1$,
which produces an allocation $x= (0,1)$ with $\csw(x) = 1$.
\end{example}

\subsection{Better than the Clinching Auction: An Instance by Instance Comparison}

One of the advantages in having a quantifiable measure of efficiency is that we
can compare two different outcomes and decide which one is ``better''. In this
section we show that although the worst-case guarantees of the clinching auction
and of the uniform-price auction are identical, the liquid welfare of the
uniform-price auction is \emph{always} (weakly) dominates that of the clinching
auction.

\begin{theorem}
Consider $n$ players with valuations $v_1 \geq \hdots \geq v_n$ and
budgets $B_1,\hdots, B_n$. Let $x^\cl$ and $x^\up$ be the outcomes of the
Clinching and Uniform Price Auctions respectively. Then: $\csw(x^\up) \geq
\csw(x^\cl)$.
\end{theorem}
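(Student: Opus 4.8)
The plan is to put both liquid welfares into closed form — each governed by a single threshold index — and then compare them term by term. Throughout I assume, as usual by perturbation, that the values are distinct, $v_1>\dots>v_n$.

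First I would pin down $\csw(x^\up)$. Arguing exactly as in the proof that the Uniform Price Auction is a $2$-approximation, and writing $\ell$ for the index of Definition~\ref{defn:uniform_price_auction} (I rename it to avoid the clash with the clinching index below): each winner $j\le\ell$ receives $x_j^\up=B_j/p$ at a price $p\le v_\ell\le v_j$, so $v_jx_j^\up\ge B_j$ and its contribution to $\csw$ is exactly $B_j$; in Case~I this already gives $\csw(x^\up)=\sum_{j\le\ell}B_j$, and in Case~II the extra winner $\ell+1$ contributes $v_{\ell+1}x_{\ell+1}^\up=v_{\ell+1}-\sum_{j\le\ell}B_j<B_{\ell+1}$ (maximality of $\ell$), so $\csw(x^\up)=v_{\ell+1}$. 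The only consequences I will need are: $\csw(x^\up)\ge\sum_{j\le\ell}B_j$ always, and $\csw(x^\up)=v_{\ell+1}$ whenever the auction is in Case~II.

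Next I would describe the clinching outcome. If $v_2\le B_1$ the clinching auction gives the whole good to player~$1$, and a direct check of Definition~\ref{defn:uniform_price_auction} shows the Uniform Price Auction also outputs $x_1=1$, so both welfares equal $\min(v_1,B_1)$ and there is nothing to prove; so assume $v_2>B_1$. By Lemma~\ref{lemma:clinching_properties} there is an agent $k$ with $p_f=v_k$, with $x_j^\cl=0$ for $v_j<v_k$, $\pi_j^\cl=B_j$ for $v_j>v_k$, and the whole good being sold, so (the values being sorted) $\sum_{j\le k}x_j^\cl=1$ and $\csw(x^\cl)=\sum_{j<k}B_j+\min(v_kx_k^\cl,B_k)$. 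The crucial observation is that, by the mechanics of Definition~\ref{defn:clinching_auction}, an agent only acquires the good at prices in the clinching interval $[p_0,p_f]$; hence for $j<k$ the payment $B_j=\pi_j^\cl$ is a price-weighted combination of units all bought at prices $\le p_f=v_k$, so $B_j\le v_kx_j^\cl$, i.e.\ $x_j^\cl\ge B_j/v_k$. Summing this over $j<k$ and using $\sum_{j<k}x_j^\cl\le\sum_{j\le k}x_j^\cl=1$ gives $\sum_{j<k}B_j\le v_k$; as $k\ge 2$ and the values are distinct this forces $\sum_{j\le k-1}B_j<v_{k-1}$, hence $\ell\ge k-1$.

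It then remains to compare in the two possible cases. If $\ell\ge k$, then $\csw(x^\up)\ge\sum_{j\le\ell}B_j\ge\sum_{j\le k}B_j=\sum_{j<k}B_j+B_k\ge\csw(x^\cl)$, using only $\min(v_kx_k^\cl,B_k)\le B_k$. If instead $\ell=k-1$, then from $\sum_{j\le\ell}B_j=\sum_{j<k}B_j\le v_k=v_{\ell+1}$ the Uniform Price Auction must be in Case~II, so $\csw(x^\up)=v_{\ell+1}=v_k$; and invoking $x_j^\cl\ge B_j/v_k$ once more, $x_k^\cl=1-\sum_{j<k}x_j^\cl\le 1-\sum_{j<k}B_j/v_k$, whence $\csw(x^\cl)\le\sum_{j<k}B_j+v_kx_k^\cl\le v_k=\csw(x^\up)$. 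The step I expect to be the real obstacle is the inequality $x_j^\cl\ge B_j/v_k$ for the budget-exhausted agents: it rests on the fact that in the continuous clinching process nobody clinches — and no jump term $\delta^i_j$ fires — at a price above $p_f=\sup\{p:S(p)>0\}$, so each agent's total allocation and total payment are accumulated over the same price range; establishing this cleanly requires working with the dynamics in Definition~\ref{defn:clinching_auction}, not just the black-box facts of Lemma~\ref{lemma:clinching_properties}. Everything else is bookkeeping with the regimes $\ell\ge k$ and $\ell=k-1$.
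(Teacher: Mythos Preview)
Your proof is correct and follows essentially the same route as the paper: both reduce to comparing threshold indices via the key inequality $B_j=\pi_j^{\cl}\le p_f\cdot x_j^{\cl}$ for budget-exhausted agents (which the paper asserts in one line as ``the final price is $v_{k^\cl+1}$, and therefore $B_i=\pi_i^\cl\le v_{k^\cl+1}x_i^\cl$'', so your flagged concern is not treated as an obstacle there either), and then split into the two regimes. Your direct derivation of $\ell\ge k-1$ and explicit appeal to Case~II are minor organizational variations on the paper's contradiction argument and its paired inequalities $(*)$ and $(**)$.
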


\begin{proof}
 Let $k^\up$ be the value of $k$ as in Definition
\ref{defn:uniform_price_auction}. We can write the liquid welfare as
$\csw(x^\up) = \sum_{j=1}^{k^\up} B_j + v_{k^\up + 1} x^\up_{k^\up + 1}$, where
$v_{k^\up + 1} x^\up_{k^\up + 1} \leq B_{k^\up + 1}$. Also, for $i \leq k^\up$,
$B_i = x_i^\up \max\{v_{k^\up + 1}, \sum_{j=1}^{k^\up} B_j \} \geq x_i^\up
v_{k^\up + 1}$, therefore: $$\textstyle 1= \sum_i x^\up_i \leq
\sum_{i=1}^{k^\up} \frac{B_i}{v_{k^\up + 1}} + x^\up_{k^\up + 1} \qquad \qquad
(*)$$

 Let $\pi^\cl$ be the payments of the clinching auction for $v,B$. By Lemma
\ref{lemma:clinching_properties}, there exists $k^\cl$ such that for every $i \leq k^\cl$,
$\pi^\cl_i = B_i$ and for $i > k^\cl + 1$, $x^\cl_i = 0$. We can write the
liquid welfare as: $\csw(x^\cl) = \sum_{j=1}^{k^\cl} B_j + \min\{B_{k^\cl
+1}, v_{k^\cl +1} \cdot x^\cl_{k^\cl +1} \}$. Also, the final price (as in Definition
\ref{defn:clinching_interval}) is $v_{k^\cl + 1}$, and therefore $B_i =
\pi^\cl_i \leq v_{k^\cl + 1} x_i^\cl$. Therefore: $$\textstyle 1 = \sum_i
x^\cl_i \geq \sum_{i=1}^{k^\cl} \frac{B_i}{v_{k^\cl + 1}} + x^\cl_{k^\cl + 1}
\qquad \qquad (**)$$

First we argue that $k^\cl \leq k^\up$. Assume for contradiction that $k^\cl >
k^\up$, then:
$$\textstyle v_{k^\up + 1} \stackrel{*}{\leq} \sum_{i=1}^{k^\up} B_i +
v_{k^\up + 1} x^\up_{k^\up + 1} \leq \sum_{i=1}^{k^\up + 1} B_i \leq
\sum_{i=1}^{k^\cl } B_i \stackrel{**}{\leq} v_{k^\cl + 1} (1-x^\cl_{k^\cl +
1}) \leq v_{k^\cl + 1} \leq v_{k^\up + 1} $$
where the last inequality comes from $k^\cl > k^\up$. This would imply that all
inequalities above hold with equalities, in particular: $B_{k^\up + 1} =
v_{k^\up + 1} x^\up_{k^\up + 1}$, and therefore
$x^\up_{k^\up + 1}=\frac {B_{k^\up + 1}} {v_{k^\up + 1}}$. Recall that we also
have that $x^\up_{k^\up + 1} = 1 -
\sum_{i=1}^{k^\up} \frac{B_i}{v_{k^\up + 1}}$, and therefore $v_{k^\up
+ 1} = \sum_{j=1}^{k^\up + 1} B_j$, contradicting the definition of $k^\up$.

So, we proved in the previous paragraph that  $k^\cl \leq k^\up$. If $k^\cl <
k^\up$, then $\csw(x^\cl) \leq \sum_{i=1}^{k^\cl + 1} B_i \leq
\sum_{i=1}^{k^\up} B_i \leq \csw(x^\up)$. Now, if
$k^\cl = k^\up$, then $(*)$ and $(**)$ together imply that $x^\cl_{k^\cl+1} \leq
x^\up_{k^\up+1}$, hence $\csw(x^\cl) = \sum_{i=1}^{k^\cl} B_i + \min\{B_{k^\cl
+ 1}, v_{k^\cl + 1} \cdot x^\cl_{k^\cl+1} \} \leq \sum_{i=1}^{k^\up} B_i +
\min\{B_{k^\up + 1}, v_{k^\up + 1} \cdot x^\up_{k^\up+1} \} = \csw(x^\up)$.
\end{proof}

\section{A Lower Bound and Some Matching Upper Bounds}

In the previous sections, we showed two different auctions that are incentive
compatible $2$-approximations to the optimal liquid welfare for the setting
of multi-unit auctions with additive valuations. In this section we investigate
the limits of the approximability of the liquid welfare. By the observation
depicted in Figure \ref{fig:xstar}, it is clear that an exact incentive
compatible mechanism is not possible for this setting. First, we present a
$\frac{4}{3}$ lower bound and show matching upper bounds for some special cases.
%
%We use the fact that with equal budgets and divisible good we can always assume that the auction is symmetric: for every $v$, $x_1(v,v)=x_2(v,v)$ (this generalizes easily for more than $2$ players). This can be proved by observing that taking the average of the outcomes of two truthful, budget respecting auctions is still truthful and budget respecting. So one can always take the average of the outcomes a given auction under all possible permutations of the players, and get a symmetric, truthful, and budget respecting auction.

\begin{theorem}\label{thm:lower-bound}
For the  multi-unit setting with additive
values and public budgets, there is no incentive-compatible mechanism that
approximates the liquid welfare objective by a factor better then
$\frac{4}{3}$.
\end{theorem}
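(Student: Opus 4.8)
The plan is to exhibit a single hard instance with two players and identical public budgets and to use only the monotonicity half of Myerson's Lemma (Lemma~\ref{lemma:myerson}) to reach a contradiction. Fix one unit of a divisible good and set $B_1 = B_2 = 1$. Suppose toward a contradiction that some incentive-compatible mechanism is an $\alpha$-approximation to $\csw^*$ on this environment with $\alpha < \tfrac43$, and let $x_1(v_1,v_2), x_2(v_1,v_2)$ be its allocation function. Choose an integer $M$ large enough that $2 - \tfrac1M > \tfrac{3\alpha}{2}$; this is possible precisely because $\alpha < \tfrac43$ forces $2 - \tfrac{3\alpha}{2} > 0$.

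The first step is to evaluate the benchmark and the mechanism on the two ``skewed'' profiles $(1,M)$ and $(M,1)$. By Lemma~\ref{lemma:x_star_format} (or by a one-line direct computation), at $v = (1,M)$ we have $\csw^* = 2 - \tfrac1M$, obtained by giving player~$2$ the fraction $\tfrac1M$ that exhausts its budget and the rest to player~$1$. On the other hand, for every feasible allocation $\csw(x) = \min(x_1,1) + \min(M x_2, 1) \le x_1 + 1$, so the approximation guarantee yields $x_1(1,M) + 1 \ge (2 - \tfrac1M)/\alpha > \tfrac32$, i.e.\ $x_1(1,M) > \tfrac12$. By the identical computation with the roles of the two players exchanged (the instance is symmetric), $x_2(M,1) > \tfrac12$.

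The second step is to invoke monotonicity. Holding $v_2 = M$ fixed, $v_1 \mapsto x_1(v_1,M)$ is non-decreasing, so $x_1(M,M) \ge x_1(1,M) > \tfrac12$; holding $v_1 = M$ fixed, $v_2 \mapsto x_2(M,v_2)$ is non-decreasing, so $x_2(M,M) \ge x_2(M,1) > \tfrac12$. Hence $x_1(M,M) + x_2(M,M) > 1$, contradicting feasibility of the allocation of one unit of the good. Therefore no incentive-compatible $\alpha$-approximation with $\alpha < \tfrac43$ exists.

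The computation itself is routine; the one point that needs care is calibrating the instance so that the threshold is exactly $\tfrac43$: the two skewed profiles must force $x_i > \tfrac12$ at the symmetric profile $(M,M)$, which happens iff $\alpha < \tfrac43$, and $M$ must be taken depending on the gap $\tfrac43 - \alpha$. I would also note two features of this argument worth stating explicitly: it uses only the monotonicity direction of Myerson's characterization and never the budget-feasibility of payments, so the lower bound holds against all incentive-compatible mechanisms; and the hard instance uses a common budget, which is consistent with (and complementary to) the matching upper bound claimed later for two players with identical public budgets.
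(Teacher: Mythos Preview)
Your proof is correct and is essentially the same argument as the paper's: the same two-player equal-budget instance, the same three profiles $(1,M)$, $(M,1)$, $(M,M)$, the same bound $\csw(x)\le 1+x_i$, and the same use of Myerson monotonicity to pin the allocations at the symmetric profile. The only cosmetic difference is that you phrase it as a proof by contradiction (forcing $x_1(M,M)+x_2(M,M)>1$), whereas the paper argues directly that one of $x_1(M,M),x_2(M,M)$ is at most $\tfrac12$ and backs this down via monotonicity to the skewed profile; these are contrapositives of one another.
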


\begin{proof}
Consider the problem of selling one divisible item to two agents that have
equal budgets $B_1 = B_2 = 1$. Let $x(v_1, v_2)$ be the allocation rule for an
auction that is a $\gamma$-approximation to $\csw^*$. Also, let $\bar{x}^*(v_1,
v_2)$ be solution maximizing the $\csw$.

Fix some number $\alpha > 1$ and consider the allocations of the auction for valuations $(1, \alpha)$, $(\alpha,
1)$, $(\alpha, \alpha)$. Since the mechanism is truthful, the allocation should
be monotone (Lemma \ref{lemma:myerson}) so: $x_1(1,\alpha) \leq x_1(\alpha,
\alpha)$ and $x_2(\alpha,1) \leq x_2(\alpha, \alpha)$. Since $x_1(\alpha,
\alpha) + x_2(\alpha, \alpha) \leq 1$ we get that one of the summands is at most $\frac 1 2$. Let us assume that $x_1(\alpha,
\alpha) \leq \frac 1 2$ and thus $x_1(1,
\alpha) \leq \frac 1 2$ (the other case is similar). Since $\csw(\bar{x}^*(1,\alpha)) =
\csw(\bar{x}^*(\alpha,1)) = 2-\frac{1}{\alpha}$, the $\gamma$ approximation
implies that:
$$\gamma^{-1} ( 2- \textstyle\frac{1}{\alpha}) \leq \min(1, x_1(1,\alpha)) +
\min(1, \alpha \cdot x_2(1,\alpha)) \leq x_1(1,\alpha) + 1 \leq 1.5$$
As $\alpha$ approaches $\infty$ we get that $\gamma$ approaches $\frac{4}{3}$.
\end{proof}

\subsection{A Matching Upper Bound for $2$ Bidders with Equal Budgets}

For the special case of $2$ players and equal budgets, we give a matching upper
bound.  Up to re-scaling values and budgets, we can assume that the
players have all budgets equal to $1$. For this special case, consider the
following auction.

\begin{defn}[$\frac{4}{3}$-approx for $\csw^*$]\label{defn:34auction}
Consider the following auction for $1$ divisible good and two bidders with
(known) budgets $B_1 = B_2 = 1$. It maps values $(v_1, v_2)$ to an allocation
$x = (x_1,x_2)$ and is symmetric (i.e., for all $v$ we have that $x_1(v,v)=x_2(v,v)$). So,
we only need to specify $x(v_1, v_2)$ for $v_1 \geq v_2$:
\begin{itemize}
 \item if $v_1 = v_2$, $x(v_1,v_2) = (\frac{1}{2}, \frac{1}{2})$
 \item if $v_2 \leq \frac{1}{3}$, $x(v_1, v_2) = (1,0)$
 \item if $\frac{1}{3} \leq v_2 \leq 1$, $x(v_1, v_2) =
(\frac{1}{4}+\frac{1}{4 v_2}, \frac{3}{4}-\frac{1}{4 v_2})$.
 \item if $1 \leq v_2 $, $x(v_1, v_2) = (\frac{1}{2}, \frac{1}{2})$
\end{itemize}
Payments are calculated using Myerson's Lemma \ref{lemma:myerson}.
\end{defn}

One can verify that the allocation rule is monotone. Moreover, the payments that
make this auction truthful do not exceed the
budget, since for any $v_i > 1$, $x_i(v_i, v_{-i}) = x_i(1, v_{-i})$. It remains
to prove the approximation guarantee, i.e.,  that $\csw(x) \geq \frac{3}{4}
\csw^*$.

\begin{lemma}\label{lemma:approx_34_auction}
 The approximation ratio of the auction in Definition \ref{defn:34auction} is
$4/3$, i.e., $\csw(x) \geq \frac{3}{4} \csw^*$.
\end{lemma}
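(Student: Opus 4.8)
Throughout, I would use that the auction, $\csw$, and $\csw^*$ are all symmetric in the two players to assume $v_1 \ge v_2$, and I would first record a closed form for the benchmark: reading off Lemma~\ref{lemma:x_star_format} for two unit‑budget bidders, $\csw^* = v_1$ if $v_1 \le 1$, and $\csw^* = 1 + \min\{1,\, v_2 - v_2/v_1\}$ if $v_1 > 1$. In particular $\csw^* \le v_1$ when $v_1 \le 1$, $\csw^* \le 1 + v_2$ when $v_1 > 1$ and $v_2 \le 1$, and $\csw^* \le 2$ always. The proof is then a case split matching the four branches of Definition~\ref{defn:34auction}: in each branch I compute $\csw(x)$ for the stated allocation and compare. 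Since the first branch governs all instances with $v_1 = v_2$, I may take $v_1 > v_2$ in the other three.

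I would dispatch the easy branches first. If $v_1 = v_2 = v$, the allocation $(\tfrac12,\tfrac12)$ gives $\csw(x) = \min\{v,2\}$, and checking the ranges $v\le 1$, $1<v\le 2$, $v>2$ against the formula above shows $\csw(x) = \csw^*$ exactly. If $v_2 \le \tfrac13$ (and $v_1 > v_2$), the allocation is $(1,0)$, so $\csw(x) = \min\{v_1,1\}$, which equals $\csw^* = v_1$ when $v_1\le 1$, and equals $1 \ge \tfrac34(1+v_2) \ge \tfrac34\csw^*$ when $v_1 > 1$ (using $v_2\le\tfrac13$). If $v_2 \ge 1$, the allocation is $(\tfrac12,\tfrac12)$ and $\csw(x) = \min\{v_1/2,1\}+\min\{v_2/2,1\}$: when $v_2\ge 2$ this is $2\ge\csw^*$; when $v_2\in[1,2)$ I split on $v_1$. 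For $v_1\ge 2$, $\csw(x) = 1+v_2/2 \ge \tfrac32 = \tfrac34\cdot 2$ if $\csw^*=2$, and otherwise $\csw(x)\ge\tfrac34\csw^*$ reduces to $v_2 - 3v_2/v_1 \le 1$, which follows from $v_2 - v_2/v_1 < 1$. For $v_1 < 2$, $\csw(x) = (v_1+v_2)/2$ and $\csw^* = 1 + v_2(1-1/v_1)$, and clearing denominators turns the claim into $2v_1^2 - 3v_1 + v_2(3-v_1) \ge 0$; since $3-v_1>0$ the left side is minimized over $v_2\ge 1$ at $v_2=1$, giving $2v_1^2-4v_1+3>0$ (negative discriminant).

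The substantive branch is $\tfrac13 \le v_2 \le 1$ with $v_1 > v_2$, where $x_1 = \tfrac{v_2+1}{4v_2}\in[\tfrac12,1]$ and $v_2 x_2 = \tfrac{3v_2-1}{4}\in[0,\tfrac12]$, so $v_2 x_2 < 1$ and, when $v_1\le 1$, also $v_1 x_1 \le x_1 \le 1$. If $v_1\le 1$ then $\csw^* = v_1$ and $\csw(x) = v_1 x_1 + v_2 x_2$; clearing denominators, $\csw(x)\ge\tfrac34 v_1$ becomes $v_1(1-2v_2) \ge v_2(1-3v_2)$, which for $v_2\le\tfrac12$ is immediate (left side $\ge 0 \ge$ right side, since $v_2\ge\tfrac13$) and for $v_2>\tfrac12$ rearranges to $v_1 \le \tfrac{v_2(3v_2-1)}{2v_2-1}$, which holds because $v_1 \le 1 \le \tfrac{v_2(3v_2-1)}{2v_2-1}$ (equivalent to $3v_2^2-3v_2+1\ge 0$). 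If $v_1>1$ then $\csw^* = 1+v_2-v_2/v_1 \le 1+v_2$: when $v_1 x_1 \ge 1$ we get $\csw(x) = 1 + \tfrac{3v_2-1}{4} = \tfrac34(1+v_2) \ge \tfrac34\csw^*$; when $v_1 x_1 < 1$, $\csw(x) = \tfrac{v_1(v_2+1)}{4v_2}+\tfrac{3v_2-1}{4}$ and a short computation gives $4\csw(x) - 3\csw^* = v_1 + \tfrac{v_1}{v_2} + \tfrac{3v_2}{v_1} - 4$, which is positive because $v_1>1$ and $\tfrac{v_1}{v_2}+\tfrac{3v_2}{v_1} \ge 2\sqrt3 > 3$ by AM--GM.

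I expect the main obstacle to be organizational rather than technical: verifying that the partition (on $v_2$ against the thresholds $\tfrac13$ and $1$, then on $v_1$ against $1$, and where relevant against $2$ or against the threshold $\tfrac{4v_2}{v_2+1}$ at which $v_1 x_1 = 1$) genuinely exhausts all instances and that each small region is routed to a place where the bound holds. The only genuinely clever estimate is the AM--GM step above; the rest is careful but routine algebra, and the ratio $\tfrac43$ is attained in the limit (e.g. $v_2=\tfrac13$, $v_1\to\infty$), matching Theorem~\ref{thm:lower-bound}.
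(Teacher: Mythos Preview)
Your proposal is correct and follows essentially the same approach as the paper: a case analysis along the branches of Definition~\ref{defn:34auction} (split on $v_2$ against $\tfrac13$ and $1$, then on $v_1$ against $1$ and the threshold where $v_1 x_1 = 1$), with the same closed-form computations of $\csw(x)$ and $\csw^*$ in each region. The only cosmetic differences are that you handle the diagonal $v_1=v_2$ separately, organize the $v_2\ge 1$ sub-cases by thresholds on $v_1,v_2$ individually rather than on $v_1+v_2$, and close Case~III.3 via AM--GM instead of the paper's direct bound.
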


The proof is by case analysis and can be found in appendix
\ref{appendix:missing_proofs}.

\section{An $O(\log^2 n)$-approximation for Subadditive Players with
Private Budgets}

In this section we consider the setting where players have subadditive
valuations and private budgets. This is a notoriously hard setting for
Pareto-optimality. In fact, considering either subadditive valuations or
privated budgets alone already produces an impossibility result for achieving
Pareto-efficient outcomes.

We will have one divisible good and each player has
a subadditive valuation $v_i:[0,1] \rightarrow \R_+$ and a budget $B_i$. This
setting differs from the previously considered in the sense that budgets $B_i$
are private information of the players.

The auction we propose is inspired in a technique by Bartal, Gonen and Nisan
\cite{BGN}. To describe it, we use the following notation: $\bv(x_i) = \min\{v_i(x_i), B_i\}$. Now,
consider the following selling procedure:

\begin{defn}[Sell-Without-$r$]
 Let $r$ be a player. Consider the following mechanism to sell
half the good, to players $i \neq r$ using the information
about $\bv_r(\frac{1}{2})$.

Divide the segment $[0,\frac{1}{2}]$ into $k = 8 \log(n)$ parts, each of size
$\frac{1}{2k}$. Associate part $i = 1,\ldots,k$ with price per unit $p_i =
\frac{2^i}{8} \bv_r(\frac{1}{2})$. Order arbitrarily all players but player $r$. Each player different than $r$, in his turn,
takes his most profitable (unallocated) subset of $[0,\frac 1 2]$ under the specified prices. Players are not allowed to pay more than their budget.

More precisely, let $p:[0,\frac{1}{2}] \rightarrow \R_+$ be such that for $x \in
[\frac{1}{2k}(i-1),\frac{1}{2k}i]$, $p(x) = p_i = \frac{2^i}{8}
\bv_r(\frac{1}{2})$. Now, for $i=1,\ldots,r-1,r+1,\ldots, n$, let $x_i$ maximize
$v_i(x_i) - \int_{z_i}^{z_i+x_i} p(t) dt$ where $z_i = \sum_{j < i} x_j$,
conditioned on the payment being below the budget, i.e., $\int_{z_i}^{z_i+x_i}
p(t) dt \leq B_i$. Set the payment as: $\pi_i = \int_{z_i}^{z_i+x_i} p(t) dt$.
\end{defn}

The subroutine Sell-Without-$r$ is used in our main construction for this section:

\begin{defn}[Estimate-and-Price]
 Given one divisible good and $n$ players with valuations $v_i(\cdot)$ and
budgets $B_i$, consider the following auction: let $r_1 =
\arg\max_i \bv_i(\frac{1}{2})$ and $r_2 = \arg\max_{i \neq r_1}
\bv_i(\frac{1}{2})$. We say that $r_1$ is the \emph{pivot player}. Let $(x,\pi)$ be the outcome of Sell-Without-$r_1$ for
players $[n]\setminus r_1$ and let $(x',\pi')$ be the outcome of
Sell-Without-$r_2$ for players $[n]\setminus r_2$.

For players $i \neq r_1$, allocate $x_i$ and charge $\pi_i$. For $r_1$
if $v_{r_1}(x'_{r_1})-\pi'_{r_1} \geq v_{r_1}(\frac{1}{2}) -
2 \cdot \bv_{r_2}(\frac{1}{2})$ allocate him $x'_{r_1}$ and charge $\pi'_{r_1}$
and if not, allocate $\frac{1}{2}$ and charge $2 \cdot \bv_{r_2}(\frac{1}{2})$.
\end{defn}

First, notice that the auction defined above is feasible, since $r_1$ is
allocated at most half of the good and the players in $[n] \setminus r_1$ get
allocated at most half of the good. Now we show that this auction is incentive
compatible:

\begin{lemma}\label{lemma:truthfulness-1}
 The Estimate-and-Price auction is incentive compatible for players with
private budgets.
\end{lemma}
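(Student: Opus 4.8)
The plan is to leverage the posted-price structure of Sell-Without-$r$ together with the fact that the prices it uses depend on nothing but the report of the single player $r$. Fix a player $i$ and all other players' reports. I would first record the auxiliary fact that in a single run of Sell-Without-$r$, every buyer $j\neq r$ has truthful reporting as a dominant strategy: the price function $p(\cdot)$ is a function of $\bv_r(\tfrac12)$ alone, the order of buyers is fixed, and the bundles grabbed by the buyers preceding $j$ depend only on $p(\cdot)$ and on the already-allocated region, hence not on $j$'s report; so when $j$'s turn comes the available segment $[z_j,\tfrac12]$ is determined independently of $j$, and $j$ simply selects the affordable sub-bundle maximizing $v_j(\cdot)-(\text{payment})$. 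The usual posted-price argument then applies: misreporting the valuation only makes $j$ pick a bundle optimal for the lie (whose true utility is no larger), while over-reporting the budget only risks being charged above $B_j$, and under-reporting it only shrinks the choice set. Denote by $U_i$ the true utility $i$ obtains by bidding truthfully in this subroutine; note $U_i\ge 0$ since the empty bundle is always available.

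Next I would isolate the only handle player $i$ has on the global structure. Let $r^\ast=\arg\max_{j\neq i}\bv_j(\tfrac12)$ and $M=\bv_{r^\ast}(\tfrac12)$, both independent of $i$'s report. If $\bv_i(\tfrac12)\le M$ then $i$ is not the pivot, so $r_1=r^\ast$ and, since every non-pivot player is handed exactly its Sell-Without-$r_1$ choice, $i$'s outcome is its choice inside Sell-Without-$r^\ast$. If $\bv_i(\tfrac12)>M$ then $i=r_1$ and $r_2=r^\ast$, and the invoked run of Sell-Without-$r_2$ is \emph{the very same computation} as before (player set $[n]\setminus r^\ast$, same order, prices driven by $M$), with $i$ still one of its buyers; thus $i$'s ``Sell-Without part'' is again its choice $q_i(\hat v_i)$ in that fixed subroutine, of true utility at most $U_i$, and on top of it the mechanism may instead hand $i$ the fallback $(\tfrac12,\,2M)$.

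With these two observations the case analysis is short. Suppose first that $i$ is not the pivot when truthful, so its truthful utility equals $U_i$. Any deviation keeping $i$ a non-pivot only changes $q_i$ and possibly the identity of $r_2$ (which affects $r_1$'s bundle, not $i$'s), hence yields true utility $\le U_i$. Any deviation turning $i$ into the pivot yields either its choice $q_i(\hat v_i)$ in Sell-Without-$r^\ast$, of true utility $\le U_i$, or the fallback, of true utility $v_i(\tfrac12)-2M$; but $\bv_i(\tfrac12)\le M$ forces $\min\{v_i(\tfrac12),B_i\}\le M$, and if in addition $2M\le B_i$ this gives $v_i(\tfrac12)\le M$, so $v_i(\tfrac12)-2M<0\le U_i$ (and if $2M>B_i$ the fallback is simply infeasible for $i$). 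Hence no deviation beats truth-telling. Suppose instead that $i$ is the pivot when truthful. Then the mechanism's comparison is evaluated at $i$'s true valuation, so $i$ truthfully receives $\max\{U_i,\ v_i(\tfrac12)-2M\}$; and every deviation falls into one of the three buckets already bounded (stay pivot in the Sell-Without branch: $\le U_i$; stay pivot and trigger the fallback: $v_i(\tfrac12)-2M$; drop to non-pivot: its choice in the unchanged Sell-Without-$r^\ast$, hence $\le U_i$), all of which are dominated.

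I expect the genuine difficulty to lie in the interaction of the pivot's fallback payment $2\bv_{r_2}(\tfrac12)$ with the \emph{private} budgets. Two points need care: that whenever the mechanism actually imposes this charge on the pivot it does not exceed the pivot's (reported, hence true) budget, and that a non-pivot player cannot profitably climb into the pivot seat --- which is precisely where ruling out budget over-reporting matters, since inflating $\bv_i(\tfrac12)$ past $M$ when $B_i\le v_i(\tfrac12)$ requires reporting a budget above $M>B_i$, exposing $i$ either to an infeasible payment or to a bundle it could already have bought honestly in Sell-Without-$r^\ast$. Everything else is bookkeeping built on the posted-price observation of the first paragraph.
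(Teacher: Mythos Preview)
Your proposal is correct and follows essentially the same approach as the paper: reduce everything to the posted-price subroutine Sell-Without-$r^\ast$ (with $r^\ast=\arg\max_{j\neq i}\bv_j(\tfrac12)$ fixed independently of $i$), and then argue that both the non-pivot branch and the pivot's Sell-Without-$r_2$ branch are literally this same computation, so the only extra option the pivot gets is the $(\tfrac12,\,2M)$ fallback. Your write-up is in fact more careful than the paper's in two places---you spell out the posted-price dominant-strategy argument explicitly, and you handle the case analysis for a non-pivot who climbs into the pivot seat (splitting on whether $2M\le B_i$) rather than just asserting ``allocated as before''; the concern you flag at the end about whether the fallback charge $2M$ can exceed the truthful pivot's budget is a legitimate question about the mechanism's budget feasibility, but it is orthogonal to the incentive-compatibility claim of the lemma.
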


\begin{proof}
We argue that no deviation in which a player changes his value and his budget
can be profitable. For a player $i \neq r_1$, if $i$ deviates and does not
become the pivot player, his allocation and prices remain the same. Now, this player could try to deviate to
become the pivot player. However, in this case he either gets allocated as before, or
he is allocated $\frac{1}{2}$ and pays $2 \bv_{r_1}(\frac{1}{2}) >
\bv_i(\frac{1}{2})$, either exceeding his budget or getting negative utility.

As for the pivot player $r_1$, he cannot benefit from decreasing $\bv_{r_1}(\frac{1}{2})$ so that he will
no longer be the pivot player (all other changes do not change his allocation and payment), since $r_2$ will become the pivot player in this case and $r_1$ will be allocated as in the
Sell-Without-$r_2$. This is uselss for $r_1$: if $r_1$ was allocated half of the good when playing truthfully, this means
that the profit of $r_1$ could only decrease.
If $r_1$ was allocated as in Sell-Without-$r_2$ when playing truthfully, then misreporting did not change his allocation and payment.
\end{proof}

Notice that the price offered to the pivot player is $2 \cdot \bv_{r_2}(\frac{1}{2})$ and not $\bv_{r_2}(\frac{1}{2})$ as the reader who is familiar with \cite{BGN}
would expect. The change is because an auction on one item where every player bids the minimum of his budget and value and the winner pays
the second highest bid is not truthful. To see that, consider two players with identical budgets that their value for the item exceed their budgets.
Each of the players has an incentive to report a slight increase in the budget to win the item.

\subsection*{An $O(\log n)$-Approximation for Submodular Bidders}

We now
show that the Estimate-and-Price auction is a poly-log approximation for
subadditive bidders. For clarity of exposition, we first show that for the
special case of \emph{submodular bidders}, the Estimate-and-Price auction is a
$O(\log n)$ approximation. Then we show how to modify the proof to handle
the wider class of subadditive valuations losing only an $O(\log n)$ factor.

Recall that a valuation function $v_i$ is subadditive if
$v_i:[0,1] \rightarrow \R_+$ such that $v_i(x_i+y_i) \leq v_i(x_i) + v_i(y_i)$.
The special case of submodular bidders is characterized by concave $v_i$
functions.

Now, we prove two lemmas towards the proof that the Estimate-and-Price
auction is a $O(\log n)$-approximation. First, we claim that not all items are
sold in the Sell-Without-$r_1$ procedure:

\begin{lemma}\label{lemma:not-all-sold}
 Let $x$ be the outcome of the Sell-Without-$r_1$ subroutine of the
Estimate-and-Price Auction, then $\sum_{i \neq r_1} x_i <
\frac{1}{2}$.
\end{lemma}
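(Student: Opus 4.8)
The plan is to prove the contrapositive by a short counting argument: assume for contradiction that the whole segment $[0,\tfrac12]$ is sold in Sell-Without-$r_1$, i.e.\ $\sum_{i\ne r_1}x_i=\tfrac12$, and compare the revenue this would generate against a trivial per‑player upper bound on the payments.

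First I would bound the payment of each non‑pivot player $i$. By the definition of Sell-Without-$r_1$, player $i$ picks $x_i$ maximizing $v_i(x_i)-\pi_i$ subject to $\pi_i=\int_{z_i}^{z_i+x_i}p(t)\,dt\le B_i$; since taking nothing is always feasible and yields nonnegative utility, $\pi_i\le v_i(x_i)$. Player $i$ is allocated a sub‑interval of $[0,\tfrac12]$, so $x_i\le\tfrac12$ and monotonicity of $v_i$ gives $v_i(x_i)\le v_i(\tfrac12)$. Combining, $\pi_i\le\min\{v_i(\tfrac12),B_i\}=\bv_i(\tfrac12)\le\bv_{r_1}(\tfrac12)$, the last step because $r_1=\arg\max_i\bv_i(\tfrac12)$. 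Hence $\sum_{i\ne r_1}\pi_i\le (n-1)\,\bv_{r_1}(\tfrac12)$.

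On the other hand, if $[0,\tfrac12]$ is entirely allocated then the total payment collected is exactly $\int_0^{1/2}p(t)\,dt=\frac1{2k}\sum_{j=1}^{k}p_j=\frac{\bv_{r_1}(1/2)}{16k}\sum_{j=1}^{k}2^j=\frac{\bv_{r_1}(1/2)}{16k}(2^{k+1}-2)$, where $k=8\log n$. Since $2^k$ is polynomially large in $n$ (it equals $n^8$ with base‑$2$ logarithms), we get $\frac{2^{k+1}-2}{16k}>n-1$ for every $n\ge 2$, which contradicts the bound of the previous paragraph. Therefore not all of $[0,\tfrac12]$ is sold, i.e.\ $\sum_{i\ne r_1}x_i<\tfrac12$.

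I do not expect a genuine obstacle: the whole proof is the single estimate that the geometric sum $\sum_{j=1}^k p_j$ of the prices, $=2^{k+1}-2$ times $\bv_{r_1}(\tfrac12)/16k$, exceeds $(n-1)$ copies of the largest conceivable payment $\bv_{r_1}(\tfrac12)$, and the choice $k=8\log n$ is precisely what makes this true. The only point needing a word of care is the degenerate case $\bv_{r_1}(\tfrac12)=0$, where all prices vanish and the per‑player maximization is indifferent; there one adopts the convention that an indifferent player takes the empty interval (or perturbs the instance and passes to the limit), after which nothing is sold and the conclusion is immediate. Everything else is routine bookkeeping, and in particular only monotonicity of the $v_i$, not submodularity, is used.
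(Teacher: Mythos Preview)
Your argument is correct and is essentially identical to the paper's proof: both bound each non-pivot player's payment by $\bv_{r_1}(\tfrac12)$ via individual rationality and the maximality of $r_1$, then derive a contradiction by observing that the total price of the interval $[0,\tfrac12]$ grows like $n^8/\log n$ (the paper just keeps the last term of your geometric sum). Your version is a touch tighter (using $n-1$ and the full sum) and you add the harmless remark about the degenerate case $\bv_{r_1}(\tfrac12)=0$, but the idea is the same.
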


\begin{proof}
 We note that $\sum_{i \neq r_1} \pi_i \leq \sum_{i \neq r_1} \bv_i(x_i) \leq
\sum_{i \neq r_1} \bv_i(\frac{1}{2}) \leq n \cdot \bv_{r_1}(\frac{1}{2})$ by
the definition of $r_1$. Now, if we were to sell $\frac{1}{2}$ to $[n]
\setminus r_1$, we would have $\sum_{i \neq r_1} \pi_i = \int_0^{1/2} p(t) dt
\geq \int_{k-1/2k}^{1/2} p(t) dt = \frac{1}{2k} \cdot \frac{2^k}{8}
\bv_{r_1}(\frac{1}{2}) = \frac{n^8}{16 \cdot 8 \cdot \log(n)} \cdot
\bv_{r_1}(\frac{1}{2}) > n \cdot \bv_{r_1}(\frac{1}{2})$ since for $n \geq 1.8$,
$n^7 > 16 \cdot 8 \cdot \log(n)$.
\end{proof}

\begin{lemma}\label{lemma:dagger_defn}
 Let $x^\dagger$ be the solution of $\max \csw(x^\dagger)$ s.t. $x^\dagger_{r_1}
= 0$ and $\sum_{i \neq r_1} x^\dagger_i = \frac{1}{2}$, then
$\bv_{r_1}(\frac{1}{2}) + \csw(x^\dagger) \geq \frac{1}{2} \csw^*$.
\end{lemma}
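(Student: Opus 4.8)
The plan is to exploit subadditivity: halving the optimal allocation loses at most a factor $2$ in liquid welfare, and the pivot player's share of that halved allocation is absorbed into the term $\bv_{r_1}(\half)$.

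First I would take an allocation $x^*$ attaining $\csw^*$, so $\sum_i x^*_i = 1$, and set $y_i = x^*_i/2$ for every $i$, so $\sum_i y_i = \half$. By subadditivity, $v_i(x^*_i) = v_i(y_i + y_i) \le 2 v_i(y_i)$, hence $\min\{v_i(x^*_i), B_i\} \le \min\{2 v_i(y_i), 2B_i\} = 2\min\{v_i(y_i), B_i\}$; summing over $i$ gives $\csw^* \le 2\,\csw(y)$, where here $\csw(y) = \sum_i \min\{v_i(y_i),B_i\}$ is evaluated on the mass-$\half$ allocation $y$ over all $n$ players (including $r_1$).

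Next I would split $\csw(y) = \bv_{r_1}(y_{r_1}) + \sum_{i \neq r_1} \min\{v_i(y_i), B_i\}$. The first term satisfies $\bv_{r_1}(y_{r_1}) \le \bv_{r_1}(\half)$ because $y_{r_1} \le \half$ and $\bv_{r_1}$ is monotone non-decreasing. For the second term, observe that $(y_i)_{i \neq r_1}$ is a feasible fractional allocation to the players other than $r_1$ of total mass $\half - y_{r_1} \le \half$; assigning the leftover mass to an arbitrary player $\neq r_1$ only (weakly) increases each term $\min\{v_i(\cdot),B_i\}$ by monotonicity of the $v_i$, producing an allocation feasible for the program defining $x^\dagger$. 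Hence $\sum_{i \neq r_1}\min\{v_i(y_i),B_i\} \le \csw(x^\dagger)$. Chaining the three bounds gives $\csw^* \le 2\,\csw(y) \le 2\big(\bv_{r_1}(\half) + \csw(x^\dagger)\big)$, i.e. $\bv_{r_1}(\half) + \csw(x^\dagger) \ge \half\,\csw^*$.

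The only point needing a word of care is the equality constraint $\sum_{i \neq r_1} x^\dagger_i = \half$ in the definition of $x^\dagger$: one must note that the maximum of $\csw$ over allocations of total mass exactly $\half$ dominates the value of any allocation of total mass at most $\half$, which again follows from monotonicity of the valuations. Beyond this bookkeeping there is no real obstacle — the proof is essentially one application of subadditivity.
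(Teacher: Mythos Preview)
Your proof is correct and follows essentially the same two-step argument as the paper: bound $\csw^*$ by $2\,\csw(\tfrac{1}{2}x^*)$, then split off the $r_1$-coordinate and invoke optimality of $x^\dagger$. The only cosmetic differences are that the paper justifies the halving step by ``concavity of $\csw$'' (the lemma sits in the submodular subsection) whereas you use subadditivity directly, and you are a bit more explicit than the paper about padding $\tfrac{1}{2}x^*_{-r_1}$ up to total mass $\tfrac{1}{2}$ to meet the equality constraint defining $x^\dagger$.
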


\begin{proof}
 Let $x^*$ be the outcome s.t. $\csw^* = \csw(x^*)$, then
$\bv_{r_1}(\frac{1}{2}) + \csw(x^\dagger) \geq \csw(\frac{1}{2} x^*) \geq
\frac{1}{2}\csw(x^*)$, where
the first inequality comes from the fact $\frac{1}{2} \geq \frac{1}{2}
x^*_{r_1}$ and that
$\frac{1}{2}x^*_{-r_1}$ satisfies the requirements of the program defining
$x^\dagger$. The second inequality comes
from the concavity of $\csw$.
\end{proof}

\begin{lemma}\label{lemma:price_bar_p}
For submodular bidders, let $x$ be the outcome of the Estimate-and-Price auction
and $\bar{p}$ be the price of the cheapest unsold item, i.e., $\bar{p} = \lim_{t
\downarrow (\sum_{i \neq r_1} x_i)} p(t)$. Let also $x'$ be any allocation. Then
for each $i \neq r_1$, $\bv(x_i) \geq \bv(x'_i) - \bar{p} \cdot x'_i$.
\end{lemma}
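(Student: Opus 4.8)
The plan is to exploit the structure of player $i$'s choice inside Sell-Without-$r_1$ -- he buys the cheapest (prefix) block of the free region that maximizes his quasi-linear utility subject to his budget -- and to compare his payoff with what he could have obtained by mimicking a truncation of $x'$. Fix $i\neq r_1$ and recall $\bv(y)=\min\{v_i(y),B_i\}$. Let $z_i=\sum_{j<i,\,j\neq r_1}x_j$ be the amount of the good sold before player $i$'s turn and $s=\sum_{j\neq r_1}x_j$, so that $z_i+x_i\le s$; by Lemma~\ref{lemma:not-all-sold} we have $s<\frac{1}{2}$, and by definition $\bar p=p(s^+)$. First I would dispose of two easy cases. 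If $v_i(x_i)\ge B_i$, then $\bv(x_i)=B_i\ge\bv(x'_i)\ge\bv(x'_i)-\bar p\,x'_i$, using $\bv(x'_i)\le B_i$. Otherwise $\bv(x_i)=v_i(x_i)$ and it suffices to show $v_i(x_i)\ge\bv(x'_i)-\bar p\,x'_i$; if $\bv(x'_i)\le\bar p\,x'_i$ the right-hand side is $\le 0\le v_i(x_i)$, so from here on I assume $\bv(x'_i)>\bar p\,x'_i$, whence $\bar p\,x'_i<\bv(x'_i)\le B_i$.

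The key step is a structural fact about $x_i$. In Sell-Without-$r_1$ player $i$ chooses $x_i$ maximizing $g(y):=v_i(y)-\int_{z_i}^{z_i+y}p(t)\,dt$ subject to the payment being at most $B_i$, and $g$ is concave because $v_i$ is concave (submodularity) and $p$ is non-decreasing. I claim his budget is not binding: if $\pi_i=B_i$, comparing with the deviation $y=0$ gives $v_i(x_i)-B_i=g(x_i)\ge g(0)=v_i(0)\ge 0$, contradicting $v_i(x_i)<B_i$. Since moreover $z_i+x_i\le s<\frac{1}{2}$, the supply cap is not binding either, so $x_i$ maximizes the concave $g$ over an interval whose right endpoint strictly exceeds $x_i$; the first-order condition then gives $v_i'(x_i)\le p\big((z_i+x_i)^+\big)$, where $v_i'$ is the right derivative of $v_i$ (it exists by concavity). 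Because $z_i+x_i\le s$ and $p$ is non-decreasing, this is $\le p(s^+)=\bar p$, and since $v_i'$ is non-increasing and $x_i\le s-z_i$ we get $v_i'(s-z_i)\le\bar p$. Therefore the concave function $h(t):=v_i(t)-\bar p\,t$ has non-positive right derivative at $s-z_i$ and is non-increasing on $[s-z_i,1]$.

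Now I would run the comparison. Set $y^\circ:=\min\{x'_i,\,s-z_i\}$. The block $[z_i,z_i+y^\circ]\subseteq[z_i,s]\subseteq[z_i,\frac{1}{2}]$ is free when player $i$ moves and every item in it costs at most $p(s^+)=\bar p$, so its total price is at most $\bar p\,y^\circ\le\bar p\,x'_i<B_i$: buying $y^\circ$ is an admissible deviation for player $i$. By optimality of $x_i$ and $\pi_i\ge 0$,
$$v_i(x_i)\ \ge\ g(x_i)\ \ge\ g(y^\circ)\ \ge\ v_i(y^\circ)-\bar p\,y^\circ\ =\ h(y^\circ)\ \ge\ h(x'_i),$$
where the last inequality is an equality when $x'_i\le s-z_i$, and when $x'_i>s-z_i$ follows from $h$ being non-increasing past $s-z_i$ together with $x'_i\ge s-z_i=y^\circ$. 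Since $h(x'_i)=v_i(x'_i)-\bar p\,x'_i\ge\bv(x'_i)-\bar p\,x'_i$, this yields $\bv(x_i)=v_i(x_i)\ge\bv(x'_i)-\bar p\,x'_i$, as required.

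The hard part is the case $x'_i>s-z_i$: the obvious deviation -- have player $i$ buy a block of measure $x'_i$ -- need not cost at most $\bar p\,x'_i$, because such a block would spill out of the already-sold cheap portion $[0,s]$ into unsold items all priced at least $\bar p$. The remedy is to let him buy only $s-z_i$ and invoke concavity: since he stopped short of the cheap frontier without exhausting his budget, his marginal value has already fallen below $\bar p$ at consumption level $s-z_i$, so the ``net value at unit price $\bar p$'' map $t\mapsto v_i(t)-\bar p\,t$ cannot increase beyond that point, making $s-z_i$ at least as good a target as $x'_i$. The other point needing care is that the first-order condition -- and hence the whole argument -- is available only once one knows player $i$'s budget does not bind, which is exactly why the reduction to $v_i(x_i)<B_i$ is made at the outset.
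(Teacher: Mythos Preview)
Your proof is correct and rests on the same two ingredients as the paper's: the budget-exhausted case (your $v_i(x_i)\ge B_i$, the paper's $\pi_i=B_i$) and, otherwise, the first-order condition $v_i'(x_i)\le \bar p$ together with concavity. The paper, however, dispatches the second case in one line: once the marginal condition holds, concavity gives $v_i(x'_i)-v_i(x_i)\le v_i'(x_i)(x'_i-x_i)\le \bar p\,x'_i$ directly, without any need for the auxiliary deviation $y^\circ$, the function $h$, or the case split on whether $x'_i$ exceeds $s-z_i$. Your route through $y^\circ$ and the monotonicity of $h$ is a valid but roundabout way to recover that same inequality; it has the flavor of the subadditive extension (where one really does need to control the block the player could have bought), but for the submodular lemma it is unnecessary machinery.
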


\begin{proof}
 If $x_i \geq x'_i$, this is true by monotonicity of $\bv_i$. Now, if not, then
player $i$ did not acquire more goods, then it must have been for one of two
reasons: either he exhausted his budget or the price exceeds his marginal
value. Notice Lemma
\ref{lemma:not-all-sold} shows that there are items that are available.

In the first case: $\pi_i = B_i$, so $\bv_i(x_i) = B_i \geq \bv_i(x'_i)$. In
the second case, $v_i(x_i) \geq v_i(x_i + \epsilon) -  p' \cdot \epsilon$ for
all $\epsilon \in [0,\epsilon_0]$ for some small $\epsilon_0$. By concavity of
$v_i(\cdot)$, $v_i(x'_i) - v_i(x_i) \leq p' \cdot (x'_i - x_i)\leq \bar{p}
\cdot x'_i$.  So, $\bv_i(x_i) = v_i(x_i) \geq v_i(x'_i) - \bar{p}
\cdot x'_i \geq \bv_i(x'_i) - \bar{p} \cdot x'_i$.
\end{proof}

Now we are ready to prove our main result:

\begin{theorem}
For submodular bidders, the Estimate-and-Price auction is an $O(\log
n)$-approximation to the
liquid welfare objective.
\end{theorem}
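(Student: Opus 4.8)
\medskip
\noindent\textbf{Proof strategy.}
Write $x$ for the outcome of Estimate-and-Price and split $\csw(x)=\sum_{i\neq r_1}\bv_i(x_i)+\bv_{r_1}(x_{r_1})$, the first term being the liquid welfare generated in the Sell-Without-$r_1$ phase and the second the pivot player's contribution. I will prove $\csw^*=O(\log n)\cdot\csw(x)$ by combining the upper bound $\csw^*\le 2\,\bv_{r_1}(\tfrac12)+2\,\csw(x^\dagger)$ from Lemma~\ref{lemma:dagger_defn} with two estimates: (a) $\csw(x^\dagger)\le O(\log n)\cdot\sum_{i\neq r_1}\bv_i(x_i)+O(1)\cdot\bv_{r_1}(\tfrac12)$, and (b) $\csw(x)\ge\Omega(1/\log n)\cdot\bv_{r_1}(\tfrac12)$; since also $\csw(x)\ge\sum_{i\neq r_1}\bv_i(x_i)$ trivially, substituting these three facts into the upper bound on $\csw^*$ finishes the theorem.

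The technical heart is a single bound on the marginal price $\bar p$, the price of the cheapest unsold item in Sell-Without-$r_1$ (which exists by Lemma~\ref{lemma:not-all-sold}). Let $R=\sum_{i\neq r_1}\pi_i=\int_0^z p(t)\,dt$ be the revenue collected, where $z=\sum_{i\neq r_1}x_i<\tfrac12$. Because the prices double from one segment to the next, each segment has width $\tfrac1{2k}$ with $k=8\log n$, and players consume prefixes, the unsold part is a suffix $[z,\tfrac12]$; if at least one full segment has been consumed then $R$ already contains the geometric sum of all cheaper full segments, which is within a factor $O(k)$ of the current price, whereas if only (part of) the first segment has sold then $\bar p\le p_2=\tfrac12\bv_{r_1}(\tfrac12)$. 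Carrying out this elementary estimate gives
$$\bar p\ \le\ \max\bigl\{\,8k\,R,\ \tfrac12\,\bv_{r_1}(\tfrac12)\,\bigr\}.\qquad(\star)$$

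For (a): individual rationality and budget feasibility give $\pi_i\le\min\{v_i(x_i),B_i\}=\bv_i(x_i)$ for every $i\neq r_1$, so $R\le\sum_{i\neq r_1}\bv_i(x_i)$; applying Lemma~\ref{lemma:price_bar_p} with $x'=x^\dagger$ (recall $x^\dagger_{r_1}=0$ and $\sum_{i\neq r_1}x^\dagger_i=\tfrac12$) yields $\sum_{i\neq r_1}\bv_i(x_i)\ge\csw(x^\dagger)-\tfrac{\bar p}2$, and casing on $(\star)$ bounds $\tfrac{\bar p}2$ by $4k\sum_{i\neq r_1}\bv_i(x_i)$ in one branch and by $\tfrac14\bv_{r_1}(\tfrac12)$ in the other, proving (a). For (b): a short check from the definition of Estimate-and-Price plus individual rationality shows $\bv_{r_1}(x_{r_1})\ge\bv_{r_1}(\tfrac12)-2\,\bv_{r_2}(\tfrac12)$ — immediate when the pivot keeps half the good, and when it is allocated $x'_{r_1}$ one uses the triggering inequality $v_{r_1}(x'_{r_1})-\pi'_{r_1}\ge v_{r_1}(\tfrac12)-2\,\bv_{r_2}(\tfrac12)$ together with $\pi'_{r_1}\ge0$ and $\pi'_{r_1}\le B_{r_1}$. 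If $\bv_{r_2}(\tfrac12)\le\tfrac38\bv_{r_1}(\tfrac12)$ this already gives $\csw(x)\ge\bv_{r_1}(x_{r_1})\ge\tfrac14\bv_{r_1}(\tfrac12)$; otherwise apply Lemma~\ref{lemma:price_bar_p} with $x'$ the allocation giving $r_2$ the amount $\tfrac12$ and nothing to anyone else, so $\csw(x)\ge\bv_{r_2}(x_{r_2})\ge\bv_{r_2}(\tfrac12)-\tfrac{\bar p}2$, and casing on $(\star)$ once more ($\tfrac{\bar p}2\le 4kR\le 4k\,\csw(x)$ in one branch, $\tfrac{\bar p}2\le\tfrac14\bv_{r_1}(\tfrac12)<\tfrac23\bv_{r_2}(\tfrac12)$ in the other) gives $\csw(x)\ge\Omega(1/k)\cdot\bv_{r_2}(\tfrac12)\ge\Omega(1/k)\cdot\bv_{r_1}(\tfrac12)$, which is (b).

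I expect the main obstacle to be making $(\star)$ precise: verifying that the doubling price schedule forces the stopping price to be within an $O(\log n)$ factor of the collected revenue once more than the first segment has sold, and handling the boundary bookkeeping around the first segment; the pivot-player inequality $\bv_{r_1}(x_{r_1})\ge\bv_{r_1}(\tfrac12)-2\,\bv_{r_2}(\tfrac12)$ also needs a careful case check across the two branches of the auction. Everything else is substitution: plugging (a), (b), and $\csw(x)\ge\sum_{i\neq r_1}\bv_i(x_i)$ into $\csw^*\le 2\,\bv_{r_1}(\tfrac12)+2\,\csw(x^\dagger)$ gives $\csw^*\le O(k)\cdot\csw(x)=O(\log n)\cdot\csw(x)$, as claimed.
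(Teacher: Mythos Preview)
Your proof is correct and rests on the same three ingredients as the paper: Lemma~\ref{lemma:dagger_defn}, Lemma~\ref{lemma:price_bar_p}, and the geometric/doubling estimate relating $\bar p$ to the revenue $R$ (your $(\star)$, which is exactly the paper's inequality $\csw(x)\ge\sum_{i\neq r_1}\pi_i\ge\bar p/(32\log n)$ when $\bar p>p_1$). The difference is organizational. The paper does a top-level case split on whether $\bv_{r_1}(\tfrac12)\ge 4\csw(x^\dagger)$: in the ``pivot-dominates'' case it uses the simple observation $\bv_{r_2}(\tfrac12)\le\csw(x^\dagger)$ (since assigning half to $r_2$ is feasible for the program defining $x^\dagger$) to show the pivot alone already achieves a constant fraction of $\csw^*$; in the other case it uses $(\star)$ and Lemma~\ref{lemma:price_bar_p} to bound $\csw(x^\dagger)$. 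You instead prove two inequalities (a) and (b) that hold unconditionally and combine them at the end; the case splits are pushed inside each of (a) and (b) via $(\star)$. One small deviation worth noting: in your (b), when $\bv_{r_2}(\tfrac12)$ is large you invoke Lemma~\ref{lemma:price_bar_p} for the single player $r_2$, whereas the paper never needs that instance of the lemma because the inequality $\bv_{r_2}(\tfrac12)\le\csw(x^\dagger)$ folds this case back into Case~II. Both routes work and give the same $O(\log n)$ bound; the paper's split is slightly cleaner, while your packaging into (a)+(b) is a bit more modular.
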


\begin{proof}
 We analyze two different cases, depending on which fraction of the optimal
liquid welfare is produced by player $r_1$. Let $x^\dagger$ be as in the
statement of Lemma \ref{lemma:dagger_defn}. Then we consider:\\

\noindent \emph{Case I}: $\bv_{r_1}(\frac{1}{2}) \geq 4 \csw(x^\dagger)$.

We note that the utility of $r_1$ is at least
$v_{r_1}(\frac{1}{2})-2 \cdot \bv_{r_2}(\frac{1}{2})$ and that
$\bv_{r_2}(\frac{1}{2})
\leq \csw(x^\dagger)$. Therefore: $u_{r_1} \geq v_{r_1}(\frac{1}{2}) - 2 \cdot
\csw(x^\dagger) \geq \frac{1}{2} \bv_{r_1}(\frac{1}{2}) \geq 2 \csw(x^\dagger)$ which
implies that $\bv_{r_1}(\frac{1}{2}) + \csw(x^\dagger) \leq 2 u_{r_1} +
\frac{1}{2} u_{r_1} = \frac{5}{2} u_{r_1}$.
Using Lemma \ref{lemma:dagger_defn}, we get that: $\csw(x) \geq
\min\{B_{r_1, }u_{r_1}\} \geq
\frac{2}{5}[ \bv_{r_1}(\frac{1}{2}) + \csw(x^\dagger) ] \geq \frac{2}{5}
\csw^*$.\\

\noindent\emph{Case II}: $\bv_{r_1}(\frac{1}{2}) < 4 \csw(x^\dagger)$.

Consider the price $\bar{p}$ as defined in Lemma \ref{lemma:price_bar_p}.
If $\bar{p} = p_1 = \frac{1}{4} \bv_{r_1}(\frac{1}{2})$, then $\csw(x) \geq
\csw(x^\dagger) - p_1 \sum_i x^\dagger_i = \csw(x^\dagger) -
\frac{1}{8} \bv_{r_1}(\frac{1}{2}) > \frac{1}{2}\csw(x^\dagger)$. Therefore:
$\textstyle \csw^* \leq 2 \cdot [ \bv_{r_1}(\frac{1}{2}) +
\csw(x^\dagger) ] < 10 \cdot \csw(x^\dagger) < 20 \cdot \csw(x)$.
If on the other hand $\bar{p} = p_k > p_1$, then:
$$\csw(x) \geq \sum_{i \neq r_1} \pi_i \geq \sum_{i=1}^{k-1} p_i
\cdot \frac{1}{2k} \geq (\bar{p} - p_1) \cdot \frac{1}{2k} \geq \frac{1}{2}
\cdot \bar{p} \cdot \frac{1}{2k} \geq \frac{\bar{p}}{32 \cdot \log(n)}$$
where the two last inequalities come from the fact that the price doubles every
interval. Now, we know that: $16 \log(n) \cdot \csw(x) \geq \frac{1}{2}
\bar{p}$. Using that together with Lemma \ref{lemma:price_bar_p} for $x'_i =
x^\dagger_i$, we get: $(16 \log(n) + 1) \csw(x) \geq \csw(x^\dagger)$. Since in
this case: $\csw(x^\dagger) \geq \frac{1}{10} \csw^*$ we have that $\csw^* \leq
O(\log n) \cdot \csw(x)$.
\end{proof}

\subsection*{An $O(\log^2 n)$-Approximation for Subadditive Bidders}
The only point we used in the previous proof that $v_i$ is concave is in Lemma
\ref{lemma:price_bar_p}. For proving every other argument the
subadditivity of $v_i$ suffices. To prove the result for subadditive valuations, we
re-define $x^\dagger$ as the argmax of $\csw(x^\dagger)$ conditioned on
$x^\dagger_i = 0$, $\sum_i x_i^\dagger = \frac{1}{2}$ and $x_i^\dagger \leq
\frac{1}{2k}$. The extra condition $x_i^\dagger \leq \frac{1}{2k}$ implies that
in the proof of Lemma \ref{lemma:price_bar_p} no player ever would need to pay
more the $2 \bar{p}$ per unit. So, we can show that $\bv_i(x_i) \geq
\bv_i(x_i^\dagger) - 2 \bar{p} \cdot x_i^\dagger$.

The only difference is that under this new definition that gap between $\csw^*$
and $\bv_{r_1}(\frac{1}{2}) + \csw^*$ is not constant anymore, but $O(k) =
O(\log n)$. This difference makes us lose another $O(\log n)$ factor in the approximation ratio:

\begin{theorem}
For subadditive bidders, the Estimate-and-Price auction is an
$O(\log^2 n)$-approximation to the liquid welfare objective.
\end{theorem}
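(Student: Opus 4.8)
The plan is to re-run the proof of the submodular theorem essentially verbatim, isolating the two places where concavity of the valuations (equivalently, of $\csw$) was used and replacing each by a subadditivity estimate whose cost I track. The first step is to redefine the auxiliary allocation $x^\dagger$: take it to be the $\csw$-maximizer subject to $x^\dagger_{r_1} = 0$, $\sum_i x^\dagger_i \le \tfrac12$, and the extra box constraint $x^\dagger_i \le \tfrac1{2k}$, where $k = 8\log n$ is the number of price intervals used in Sell-Without (this is feasible since $(n-1)/(2k) \ge \tfrac12$ for large $n$). Lemma~\ref{lemma:not-all-sold} is untouched --- its proof uses only subadditivity and the geometric growth of the prices --- so $\sum_{i \ne r_1} x_i < \tfrac12$ and the cheapest-unsold price $\bar p$ is well defined exactly as before.

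Next I would re-establish the two quantitative lemmas. The analogue of Lemma~\ref{lemma:price_bar_p} holds with the constant doubled: $\bv_i(x_i) \ge \bv_i(x^\dagger_i) - 2\bar p\, x^\dagger_i$ for every $i \ne r_1$. The cases $\pi_i = B_i$ and $x_i \ge x^\dagger_i$ go through as before; otherwise the segment $[z_i + x_i,\ z_i + x_i + x^\dagger_i]$ that $i$ declined to buy has width $\le \tfrac1{2k}$ (one price-interval width) and begins at a position $\le \sum_{j \ne r_1} x_j$, hence spans at most two consecutive intervals and costs at most $2\bar p$ per unit --- this is precisely why the box constraint was imposed --- so greedy-optimality of $x_i$ (affordability handled through $\pi_i \le \bv_i(x_i)$) together with monotonicity of $v_i$ gives the bound. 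The key new lemma --- and, I expect, the crux of the whole argument --- is the weakened form of Lemma~\ref{lemma:dagger_defn}, namely $\csw^* \le O(k)\,(\bv_{r_1}(\tfrac12) + \csw(x^\dagger))$. Since $\csw$ is no longer concave I cannot simply halve $x^*$; instead I would scale $x^*_{-r_1}$ by $\tfrac1{2k}$ (a choice that respects the box constraint) and invoke monotone subadditivity in the form $v_i(\tfrac1{2k} x^*_i) \ge \tfrac1{2k} v_i(x^*_i)$ (apply subadditivity $\lceil 2k \rceil$ times) together with $\min\{\tfrac{a}{2k}, b\} \ge \tfrac1{2k}\min\{a, b\}$, obtaining $\csw(x^\dagger) \ge \tfrac1{2k}\,(\csw^* - \bv_{r_1}(x^*_{r_1}))$; finally $\bv_{r_1}(x^*_{r_1}) \le \bv_{r_1}(1) \le 2\bv_{r_1}(\tfrac12)$ by subadditivity of $v_{r_1}$.

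With these in hand I would repeat the case analysis of the submodular theorem. The only structural change is that every threshold involving the runner-up's admissibility must be rescaled by $\Theta(k)$, since one now controls $\bv_{r_2}(\tfrac12)$ only by $2k\,\csw(x^\dagger)$ (again via the $\tfrac1{2k}$-scaling) rather than by $\csw(x^\dagger)$. After this rescaling, the ``pivot dominates'' branch gives $\csw(x) \ge \tfrac12 \bv_{r_1}(\tfrac12) \ge \Omega(\tfrac1k)\,(\bv_{r_1}(\tfrac12) + \csw(x^\dagger))$; in the complementary branch one splits on whether $\bar p = p_1$ (where almost nothing is sold to non-pivots, and a short subadditivity estimate --- a non-pivot who buys nothing at price $p_1$ must have right-derivative at $0$ at most $p_1$, so $\bv_i(\tfrac12) = O(\bv_{r_1}(\tfrac12))$ for all $i$, putting us back in the pivot-dominates regime) or $\bar p = p_\ell > p_1$ (where geometric price growth gives $\csw(x) \ge \sum_i \pi_i \ge \Omega(\bar p / k)$ and the modified price lemma gives $\csw(x^\dagger) \le \csw(x) + \bar p \le O(k)\csw(x)$). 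In every branch $\csw(x^\dagger) \le O(k)\csw(x)$ and $\bv_{r_1}(\tfrac12) \le O(k)\csw(x)$, so the covering lemma yields $\csw^* \le O(k) \cdot O(k)\,\csw(x) = O(\log^2 n)\,\csw(x)$.

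The main obstacle is the rescaled covering lemma: with no concavity available, the only way to manufacture a box-feasible near-optimal allocation is to scale $x^*$ down by $\tfrac1{2k}$, and subadditivity then preserves only a $\tfrac1{2k}$-fraction of the value, an $O(k) = O(\log n)$ loss; composed with the $O(\log n)$ already incurred in the submodular analysis, this produces the claimed $O(\log^2 n)$. The one other use of concavity, in Lemma~\ref{lemma:price_bar_p}, is exactly what the box constraint neutralizes, at a harmless factor of $2$. What remains is bookkeeping that needs care: because the box constraint also loosens the control on $\bv_{r_2}(\tfrac12)$ by a factor $k$, the threshold separating the pivot-dominates case from the rest --- and the handling of the degenerate $\bar p = p_1$ sub-case --- must be re-tuned, though neither introduces any further logarithmic loss.
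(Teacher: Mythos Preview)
Your plan is precisely the paper's: redefine $x^\dagger$ with the additional box constraint $x^\dagger_i \le \tfrac{1}{2k}$, re-prove the price lemma with a harmless factor $2$ (since a width-$\tfrac{1}{2k}$ segment touches at most two price intervals), re-prove the covering lemma with an $O(k)$ loss via the subadditive scaling $\bv_i(y) \le \lceil 2k\rceil\,\bv_i(y/2k)$, and re-run the submodular case analysis. The paper's own argument is a two-paragraph sketch that names exactly these two modifications and asserts the extra $O(\log n)$ loss; you supply the details.

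You actually go one step further than the paper's sketch, and correctly so: you notice that under the box constraint the bound $\bv_{r_2}(\tfrac12) \le \csw(x^\dagger)$ used in Case~I degrades to $\bv_{r_2}(\tfrac12) \le k\,\csw(x^\dagger)$ (since only $\tfrac{1}{2k}$ can now be given to $r_2$, and subadditivity recovers the factor $k$), which forces the Case~I/Case~II threshold to be rescaled by $\Theta(k)$. The paper does not mention this, but without it Case~I collapses: with the original threshold $\bv_{r_1}(\tfrac12)\ge 4\csw(x^\dagger)$ one only gets $u_{r_1} \ge v_{r_1}(\tfrac12) - \tfrac{k}{2}\bv_{r_1}(\tfrac12)$, which is useless.

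The one soft spot in your write-up is the $\bar p = p_1$ sub-case of the rescaled Case~II. The sentence ``$\bv_i(\tfrac12) = O(\bv_{r_1}(\tfrac12))$, putting us back in the pivot-dominates regime'' does not do what you want: the inequality $\bv_i(\tfrac12)\le\bv_{r_1}(\tfrac12)$ holds trivially by the definition of $r_1$ and says nothing about the ratio $\bv_{r_1}(\tfrac12)/\csw(x^\dagger)$ that separates the two regimes. A workable patch is to observe that when $\bar p = p_1$ the modified price lemma already gives $\sum_{i\neq r_1}\bv_i(x_i) \ge \csw(x^\dagger) - p_1$; combine this with the pivot's contribution $\bv_{r_1}(x_{r_1}) \ge \min\{B_{r_1},\, v_{r_1}(\tfrac12) - 2\bv_{r_2}(\tfrac12)\}$ and a finer three-way split on the size of $\bv_{r_1}(\tfrac12)$ relative to $\csw(x^\dagger)$ (the ranges $\le 2\csw(x^\dagger)$, between $2\csw(x^\dagger)$ and $\Theta(k)\csw(x^\dagger)$, and above). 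The paper's sketch does not treat this sub-case either, so you are not missing anything the paper supplies; but the argument as you wrote it needs tightening here.
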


\subsection*{Extension to Indivible Goods}
We note that all techniques in
this section generalize to indivisible goods. For $m$ identical indivisible
goods, one can get the an $O(\log m)$ approximation by dividing the items in
$O(\log m)$ groups and setting prices $p_i = 2^i \cdot
O(\bv_{r_1}(\frac{1}{2}))$, where $v_{r_1}(\frac{1}{2})$ now stands for the
value of player $r_1$ for half the good.

\bibliography{sigproc}
% You must have a proper ".bib" file
%  and remember to run:
% latex bibtex latex latex
% to resolve all references
%
% ACM needs 'a single self-contained file'!
%
%APPENDICES are optional
%\balancecolumns
\appendix
\section{Missing Proofs}\label{appendix:missing_proofs}

\begin{proofof}{Lemma \ref{lemma:uniform_monotonicity}}
  Fix a valuation profile $v$ with $v_1 \geq \hdots \geq v_n$ and let $k$ be as
in Definition \ref{defn:uniform_price_auction}. The agents
$i=1,\ldots,k$ do not change their allocation if they increase their value. For
a player $i > k+1$, their allocation can go from zero to non-zero once their
value is so high that they become the $k+1$ player. It remains to consider the
$k+1$ player: as he increases his allocation and continues to have the $k+1$
highest value, his allocation increases, since it is 
$x_{k+1} = 1-\sum_{j=1}^k \frac{B_j}{v_{k+1}}$.

Now, two things can happen while $v_{k+1}$ increases:
\begin{itemize}
 \item the value of $v_{k+1}$ reaches $\sum_{j=1}^{k+1} B_j$ and the allocation
gets updated to $ \frac{B_{k+1}}{\sum_{j=1}^{k+1} B_j}$. At this point, the
allocation of $x_{k+1}$ continues the same as $v_{k+1}$ increases.

 \item the value of $v_{k+1}$ reaches $v_k$ and displaces $k$ and the $k$-th
highest value. One of two things happen: (i) if $\sum_{j=1}^{k-1} B_j + B_{k+1}
> v_k$, then the market clearing price becomes $v_k = v_{k+1}$ and therefore
the allocation $x_{k+1}$ gets updated to $1-\sum_{j=1}^{k-1} \frac{B_j}{v_{k}}
\geq 1-\sum_{j=1}^{k} \frac{B_j}{v_{k}}$; or (ii) if $\sum_{j=1}^{k-1} B_j +
B_{k+1} \leq v_k$, now the market clearing price becomes $v_{k}$ and $k+1$ is
now allocated as $\frac{B_{k+1}}{v_k} \geq 1-\sum_{j=1}^{k} \frac{B_j}{v_{k}} $.
\end{itemize}
\end{proofof}

\begin{proofof}{Lemma \ref{lemma:budget_feasibility}}
For players $i > k+1$, this is trivial, since they do not get goods and pay
zero. For the rest of the players we look at the two cases:
\begin{itemize}
 \item Case I : player $k+1$ also does not get goods and pays zero. For the rest
of the player, their allocation is constant for any value $v'_i \geq
\sum_{j=1}^k B_j$, so, their payment is bounded by $(\sum_{j=1}^k B_j) \cdot x_i
= B_i$.
 \item Case II: player $k+1$ pays at most $v_{k+1} x_{k+1} = v_{k+1} \cdot
\left( 1- \sum_{j=1}^k \frac{B_j}{v_{k+1}}\right) < B_{k+1}$, since
$v_{k+1} < \sum_{j=1}^{k+1} B_j$ by the
definition of $k$. For the rest of the players $i \leq k$, their allocation is
constant for all $v'_i \geq v_{k+1}$, so their payment is bounded by $v_{k+1}
\cdot x_i = B_i$.
\end{itemize}
\end{proofof}

\begin{proofof}{Lemma \ref{lemma:approx_34_auction}}
 The proof is by case analysis. We assume $v_1 \geq v_2$, since $v_2 \geq v_1$
is analogous.
\begin{itemize}
 \item Case I : $v_1 \geq v_2$ and $v_2 \leq \frac{1}{3}$

\begin{itemize}
\item I.1 : if $v_1 \leq 1$, then $\csw(x) = v_1 = \csw^*$.
\item I.2 : if $v_1 > 1$, then: $\csw^* = 1
+ \min(1, v_2(1-\frac{1}{v_1})) \leq 1 + v_2 \leq \frac{4}{3}$ and $\csw(x) =
1$.
\end{itemize}

 \item Case II: $v_1 \geq v_2 \geq 1$

\begin{itemize}
\item II.1: if  $v_1 + v_2 \geq 3$, then: $\csw^* \leq 2$ and $\csw(x) =
\min(1,\frac{1}{2}v_1) + \min(1,
\frac{1}{2} v_2) \geq \frac{3}{2}$.
\item II.2: if $v_1 + v_2 \leq 3$ then $\csw^* = 1 +
\min(1, v_2(1-\frac{1}{v_1})) = 1 + v_2(1-\frac{1}{v_1})$ since for $1\leq
v_1,v_2 \leq 2$, it is easy to see that $v_2(1-\frac{1}{v_1})
\leq 1$. Also: $\csw(x) = \frac{1}{2}(v_1 + v_2)$.

We want to show that $\frac{1}{2}(v_1 + v_2) \geq \frac{3}{4} \cdot \left[
1+ v_2(1-\frac{1}{v_1}) \right]$. Re-writting that, we have that this is
equivalent to: $v_2 \geq \frac{2 v_1 - 3}{1 - 3/v_1}$. Taking derivatives, we
see that $\frac{2 v_1 - 3}{1 - 3/v_1}$ is monotone decreasing in the interval
$[1,2]$, so for this interval: $\frac{2 v_1 - 3}{1 - 3/v_1} \leq \frac{1}{2}
\leq 1 \leq v_2$.
\end{itemize}

\item Case III: $v_1 \geq v_2$, $\frac{1}{3} \leq v_2 \leq 1$

\begin{itemize}
\item III.1: $v_1 \leq 1$, then $\csw^* = v_1$ and $\csw(x) =
v_1 \cdot (\frac{1}{4} + \frac{1}{4v_2}) + v_2 \cdot (\frac{3}{4} -
\frac{1}{4v_2})$. Therefore, what we want to show is that: $v_1 \cdot
(\frac{1}{4} + \frac{1}{4v_2}) + v_2 \cdot (\frac{3}{4} -
\frac{1}{4v_2}) \geq \frac{3}{4} v_1$. We can re-write this as $3v_2 -1 \geq
(2-\frac{1}{v_2}) v_1$. For $\frac{1}{3} \leq v_2 \leq \frac{1}{2}$, this is
trivially true by sign analysis, since the left hand side is non-positive and
the right hand side is non-negative. For, for $\frac{1}{2} \leq v_2 \leq 1$,
this is equivalent to $v_1 \leq \frac{3 v_2 -1}{2-1/v_2}$, we note that the
function $\frac{3 v_2 -1}{2-1/v_2}$ is monotone non-increasing in the
interval $[\frac{1}{2},1]$ so: $\frac{3 v_2 -1}{2-1/v_2} \geq \frac{3}{2} \geq
v_1$.
\item III.2: $v_1 \geq 1$ and  $1 \leq v_1 \cdot (\frac{1}{4} + \frac{1}{4
v_2})$, then: $\csw^* = 1 +
\min(1, v_2(1-\frac{1}{v_1})) = 1 + v_2(1-\frac{1}{v_1})$ and $\csw(x) =
\min(1, v_1 \cdot (\frac{1}{4} + \frac{1}{4 v_2})) + v_2 \cdot (\frac{3}{4} -
\frac{1}{4 v_2}) = 1 + v_2 \cdot (\frac{3}{4} - \frac{1}{4 v_2})$. Therefore:
$\csw(x) = \frac{3}{4}(1+v_2) \geq \frac{3}{4} \csw^*$.

\item III.3: $v_1 \geq 1$ and  $1 > v_1 \cdot (\frac{1}{4} + \frac{1}{4
v_2})$, then: $\csw^* =  = 1 + v_2(1-\frac{1}{v_1})$ and $\csw(x) =
\min(1, v_1 \cdot (\frac{1}{4} + \frac{1}{4 v_2})) + v_2 \cdot (\frac{3}{4} -
\frac{1}{4 v_2}) = v_1 \cdot (\frac{1}{4} + \frac{1}{4 v_2}) + v_2 \cdot
(\frac{3}{4} - \frac{1}{4 v_2})$. We want to show that $v_1 \cdot (\frac{1}{4} +
\frac{1}{4 v_2}) + v_2 \cdot
(\frac{3}{4} - \frac{1}{4 v_2}) \geq \frac{3}{4} (1 + v_2(1-\frac{1}{v_1})) $,
which can be re-written as $\frac{v_1}{4} + \frac{v_1}{v_2} \geq 1$, which
follows from $v_1 \geq v_2$.
\end{itemize} \end{itemize} 
\end{proofof}

%\input{clinching-appendix.tex}
%\input{characterization-appendix.tex}

% That's all folks!
\end{document}